
\documentclass{llncs}
\pagestyle{plain}

\usepackage{amsmath}
\usepackage{amsfonts}
\usepackage{amssymb}

\spnewtheorem{heuristic}{Heuristic}{\itshape}{\rmfamily}

\usepackage{mathtools}
\usepackage{thmtools}

\newif\ifeprint
\eprinttrue


\usepackage{algorithm}
\usepackage{algorithmicx}
\usepackage{algpseudocode}
\usepackage[utf8]{inputenc}
\usepackage{tikz}
\usetikzlibrary{trees}
\usetikzlibrary{patterns}

\usepackage{pgfplots}
\usepackage{bbm}
\usepackage{multirow}
\usepackage{enumerate}
\usepackage{enumitem}
\usepackage{booktabs}
\usepackage{xcolor}
\usepackage{xifthen}

\usepackage{ dsfont }



%


\ifeprint 

\else 

\fi

\pgfplotsset{compat=1.10}

\usepgfplotslibrary{fillbetween}
\colorlet{darkgreen}{green!50!black}
\colorlet{darkblue}{blue!50!black}
\colorlet{darkorange}{orange!90!black}
\colorlet{darkred}{red!50!black}

\usepackage{xifthen}

\newcommand{\bigO}[1]{ \ifthenelse{\isempty{#1}}{\mathcal{O}}{ \mathcal{O} \! \left(#1 \right)} }
\newcommand{\bigOt}[1]{ \ifthenelse{\isempty{#1}}{\widetilde{\mathcal{O}}}{ \widetilde{\mathcal{O}} \! \left(#1 \right)} }

\newcommand{\zo}{\{0,1\}}

\newcommand{\ket}[1]{\left| #1 \right\rangle}

\newcommand{\prob}[2][]{\Pr
\ifthenelse{\isempty{#1}}{}{_{#1}}
\ifthenelse{\isempty{#2}}{}{\left[ #2 \right]}
}

\usepackage[colorlinks=true]{hyperref}
\usepackage{cleveref}

\crefname{problem}{problem}{problems}
\Crefname{problem}{Problem}{Problems}
\crefname{proposition}{proposition}{propositions}
\Crefname{proposition}{Proposition}{Propositions}
\crefname{heuristic}{heuristic}{heuristics}
\Crefname{heuristic}{Heuristic}{Heuristics}

\renewcommand{\autoref}[1]{\Cref{#1}}



\def\epsilon{\varepsilon}

\algblockdefx[Repeat]{Repeat}{EndRepeat}%
[1]{\textbf{Repeat} #1 \textbf{times}}%
{\textbf{EndRepeat}}

\usepackage{xstring}
\makeatletter\def\setkey#1#2#3{\protected@edef#1##1{\protect\IfEq{#2}{##1}{#3}{#1{##1}}}}\makeatother
\def\newdict#1#2{\def#1##1{#2}}

\title{A Tight Quantum Algorithm for Multiple Collision Search}

\ifeprint
\author{Xavier Bonnetain\inst{1} \and Johanna Loyer\inst{2} \and André Schrottenloher\inst{3} \and Yixin Shen\inst{3}}
\institute{Université de Lorraine, CNRS, Inria, Nancy, France \and
Inria Saclay \and
Univ Rennes, Inria, CNRS, IRISA, Rennes, France}
\else
\author{}
\institute{}
\fi

\begin{document}
\maketitle
\renewcommand{\labelitemi}{$\bullet$}

\begin{abstract}
Searching for collisions in random functions is a fundamental computational problem, with many applications in symmetric and asymmetric cryptanalysis. When one searches for a \emph{single} collision, the known quantum algorithms match the query lower bound. This is not the case for the problem of finding \emph{multiple} collisions, despite its regular appearance as a sub-component in sieving-type algorithms.

At EUROCRYPT 2019, Liu and Zhandry gave a query lower bound $\Omega{}(2^{m/3 + 2k/3})$ for finding $2^k$ collisions in a random function with $m$-bit output. At EUROCRYPT 2023, Bonnetain et al. gave a quantum algorithm matching this bound for a large range of $m$ and $k$, but not all admissible values. Like many previous collision-finding algorithms, theirs is based on the MNRS quantum walk framework, but it \emph{chains} the walks by reusing the state after outputting a collision.

In this paper, we give a new algorithm that tackles the remaining non-optimal range, closing the problem. Our algorithm is tight (up to a polynomial factor) in queries, and also in time under a quantum RAM assumption. The idea is to extend the chained walk to a regime in which several collisions are returned at each step, and the ``walks'' themselves only perform a single diffusion layer.
\end{abstract}

\keywords{Quantum algorithms, quantum walks, MNRS framework, multiple collision search, quantum cryptanalysis.}

\section{Introduction}

The collision search problem asks to find pairs of (distinct) inputs of a function that have the same output. It is a fundamental problem in quantum algorithms and cryptography, arguably only second to unstructured search. Many applications require to find many such collisions, which we can formulate as follows.

\begin{problem}[Multiple collision search]\label{pb:multicoll}
Given access to a (random) function $f~: \zo^n \to \zo^m$, where $n \leq m \leq 2n$, given $k \leq 2n-m$, find $2^k$ collisions of $f$, i.e., pairs of distinct $x,y$ such that $f(x) = f(y)$.
\end{problem}

The constraints on the parameters $n, m, k$ ensure that such collisions exist and that we are not asking for more collisions than the expected number of them ($\Theta(2^{2n-m})$ for a random function $f$ of this form).

\paragraph{Tightness of Existing Algorithms.}
Classically, it is well known that one can find $2^k$ collisions in $\bigO{}(2^{(m+k)/2})$ time and queries to $f$ (which is also a lower bound), against $\bigO{}(2^{m/2})$ for a single collision. Indeed, a list of $2^{(m+k)/2}$ random queries to $f$ can be expected to contain (the order of) $2^{(m+k) - m} = 2^k$ collisions. This simple algorithm is tight.

In the quantum setting, Aaronson and Shi~\cite{DBLP:journals/jacm/AaronsonS04} showed a query lower bound $\Omega(2^{m/3})$ for a single collision, which is matched by the BHT~\cite{DBLP:conf/latin/BrassardHT98} and Ambainis'~\cite{DBLP:journals/siamcomp/Ambainis07} algorithms for different values of $m$, thereby solving the problem for $2^k = 1$.

For a general $k$, Liu and Zhandry~\cite{DBLP:conf/eurocrypt/LiuZ19} showed a lower bound $\Omega(2^{2k/3+m/3})$. It is common knowledge that by extending the BHT algorithm to multiple outputs, one can reach the bound for $k \leq 3n-2m$, in queries to $f$, and also in time if one assumes quantum-accessible classical memory. The \emph{chained quantum walk} introduced in~\cite{DBLP:conf/eurocrypt/BonnetainCSS23} reaches the same complexity for $k \leq \min(2n-m, m/4)$, in queries, and in time if one assumes quantum-accessible quantum memory. Unfortunately, the union of these two parameter ranges does not entirely cover all $n,k,m$ such that $n \leq m \leq 2n$ and $k \leq 2n-m$. A range of $(k,m)$, forming a triangle contained in $[n/3;n] \times [n; 1.6n]$, remains only covered by basic generalizations of both algorithms which reach a suboptimal query and time complexity.

\newdict{\xlabel}{n}
\setkey{\xlabel}{1}{4n/3}
\setkey{\xlabel}{2}{3n/2}
\setkey{\xlabel}{3}{2n}

\newdict{\ylabel}{0}
\setkey{\ylabel}{1}{n/4}
\setkey{\ylabel}{2}{n/3}
\setkey{\ylabel}{3}{2n/5}
\setkey{\ylabel}{4}{2n/3}
\setkey{\ylabel}{5}{n}

\begin{figure}[htbp]
\centering
\begin{tikzpicture}
\begin{axis}[
scale=1., legend pos=outer north east,
xlabel={$m$}, ylabel={$k$},
xtick={1,1.333,1.5,2},
ytick={0,0.25,0.333,0.4,0.667,1},
xticklabel={$\xlabel{\ticknum}$},
yticklabel={$\ylabel{\ticknum}$},
ymin=0,ymax=1.0,xmin=1,xmax=2.0, xmajorgrids,ymajorgrids,
title={},
legend style={cells={anchor=west},name=legend,at={(1.1,0.5)},anchor=west}
]
\addplot[name path=limit] coordinates { (1,1) (2,0)};
\addplot[name path=bht] coordinates { (1,1) (1.5,0) };
\addplot[name path=nous] coordinates { (1,0.25) (8/5,8/20) (2,0) };
\addplot[name=vertical] coordinates { (1.333333, 0.333333) (1.333333, 0.666666) };
\addplot[name path=tmp] coordinates {(0,0) (2,0)};
\path[name path=axis] (0,0) -- (2,0);
\addplot [ thick, color=blue, fill=blue,  fill opacity=0.2]
    fill between[ of= bht and tmp];
\addplot [ thick, color=red, fill=red,  fill opacity=0.2 ]
    fill between[ of= nous and tmp];
\addplot [ thick, color=green, fill=green,  fill opacity=0.2 ]
    fill between[ of= limit and bht, soft clip={domain=1:1.333333}];
\addplot [ thick, color=orange, fill=orange,  fill opacity=0.2 ]
    fill between[ of= limit and nous, soft clip={domain=1.333333:2}];
\node[color=darkblue] at (axis cs:1.13,0.5) {\begin{tabular}{c}BHT \\ $\frac{2k}{3} + \frac{m}{3}$ \end{tabular}};
\node[color=darkred] at (axis cs:1.7,0.15) {\begin{tabular}{c} \cite{DBLP:conf/eurocrypt/BonnetainCSS23} \\ $\frac{2k}{3} + \frac{m}{3}$ \end{tabular}};
\node[color=darkorange] at (axis cs:1.7,0.6) {\begin{tabular}{c} \cite{DBLP:conf/eurocrypt/BonnetainCSS23} (extended) \\ $k + \frac{m}{4}$ \end{tabular}};
\node[color=darkgreen] at (axis cs:1.4,0.8) {\begin{tabular}{c}BHT (extended) \\ $k + m - n$ \end{tabular}};
\end{axis}
\end{tikzpicture}
\caption{Parameter ranges and complexity exponents of quantum multiple collision search algorithms. This figure is taken from~\cite{DBLP:conf/eurocrypt/BonnetainCSS23}. (All complexities are both in queries, and gate count, assuming qRAM gates).}
\label{fig:parameter-ranges}
\end{figure}
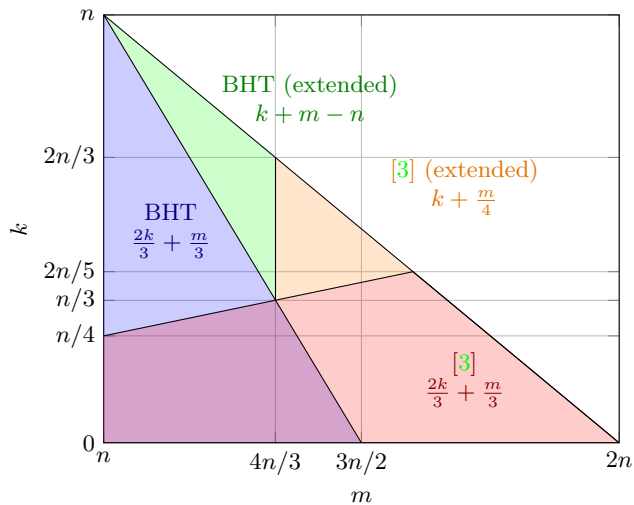

\paragraph{Contribution.}
In this paper, we give a new algorithm that covers this remaining range. Like in~\cite{DBLP:conf/eurocrypt/BonnetainCSS23}, we use a \emph{chained} version of the MNRS quantum walk framework~\cite{DBLP:journals/siamcomp/MagniezNRS11}. The idea of the chained quantum walk is to reuse the final state of a walk as starting state for a new one. Indeed, the state of a walk is a superposition of subsets of $\{0,1\}^n$. At the end of a walk, one expects the subsets to always contain a collision: one can \emph{extract} and measure this collision, collapsing the vertex on a superposition of subsets which is still close to uniform -- and sufficient to carry on.

In this new version, we increase the vertex size so much that in fact, the quantum state always contains (exponentially many) collisions on average. Now, there is no need to properly \emph{walk} during the quantum walks, and they are reduced to a single diffusion layer. The effect of the diffusion is to re-randomize the vertex, making new collisions appear.

This allows us to prove the following:

\begin{restatable}{theorem}{mainthm}
\label{thm:main}
For all $n \leq m \leq 2n$, $k \leq 2n-m$, there exists a quantum algorithm solving the multiple collision search problem in $\bigOt{2^{2k/3 + m/3}}$ queries, memory and total gate count (assuming qRAM gates). Furthermore, for any $\ell \leq 2k/3 + m/3$, there exists a quantum algorithm using memory $2^{\ell}$ and  $\bigOt{2^{k + m/2 -\ell/2}}$ queries and total gate count (assuming qRAM gates).
\end{restatable}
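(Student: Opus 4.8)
The plan is to instantiate the chained MNRS walk of~\cite{DBLP:conf/eurocrypt/BonnetainCSS23} on the Johnson graph whose vertices are the size-$2^\ell$ subsets $L\subseteq\zo^n$, each decorated with a history-independent data structure that stores the pairs $(x,f(x))$ for $x\in L$ and supports insertion/deletion of one element (one query, $\poly$ gates assuming qRAM gates), $\CHECK$ (detecting whether $L$ contains a collision not yet output), and $\EXTRACT$ (returning such a collision). The setup cost is $\bigOt{2^\ell}$ queries, memory and gates. The free parameter $\ell$ plays the role of the memory exponent; I would take $\ell=2k/3+m/3$ for the first statement and keep it general for the tradeoff. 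The regime that is \emph{not} already handled — by the plain chained walk of~\cite{DBLP:conf/eurocrypt/BonnetainCSS23} (which I would simply invoke when $\ell\le m/2$) nor by the extended BHT algorithm — is $\ell>m/2$, where a uniformly random vertex contains $\Theta(2^{2\ell-m})\gg 1$ collisions, so that essentially every vertex is marked.

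The core of the argument is the behaviour of a \emph{degenerate} walk in this collision-rich regime. Since the marked fraction is $\Theta(1)$, the MNRS step count $\tfrac{1}{\sqrt{\delta\epsilon}}$ no longer reflects the real cost: there is nothing to amplify, and a full Johnson step is both unnecessary and too expensive. Each ``walk'' in the chain is therefore replaced by a single diffusion layer that resamples one coordinate of $L$, wrapped in a short amplitude-amplification loop post-selecting on the event ``the resampled element participates in a fresh collision with $L$''; since that event has probability $\Theta(2^{\ell-m})$, the loop uses $\bigOt{2^{(m-\ell)/2}}$ iterations, each costing one query and $\poly$ qRAM gates, after which one call to $\EXTRACT$ yields a fresh collision which is measured. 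I would prove, via the standard MNRS and amplitude-amplification machinery together with a concentration/occupancy argument on random subsets, (i) that this layer maps a state that is $\exp$-close to ``uniform over vertices containing the already-extracted collisions'' to one that is $\exp$-close to the same distribution with one extra fresh collision added, and (ii) that the accumulated deviation over the whole chain stays negligible, so $\CHECK$ and $\EXTRACT$ keep succeeding. Chaining $2^k$ such layers after the single setup then gives total query and gate count $\bigOt{2^\ell+2^k\cdot 2^{(m-\ell)/2}}=\bigOt{2^{k+m/2-\ell/2}}$ with memory $\bigOt{2^\ell}$ (the two terms balance exactly at $\ell=2k/3+m/3$, recovering $\bigOt{2^{2k/3+m/3}}$); combining with the ranges covered by~\cite{DBLP:conf/eurocrypt/BonnetainCSS23} and the extended BHT algorithm completes the theorem for all admissible $n,m,k$.

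Two bookkeeping points require care. First, distinctness: the $2^k$ returned collisions must be pairwise distinct, yet the memory budget $2^\ell$ may be much smaller than $2^k$, so one cannot simply store all outputs. I would instead argue probabilistically that a freshly resampled element almost never recreates an already-output collision — e.g. by only ever extracting collisions both of whose elements currently lie in $L$ and rotating elements out of $L$ fast enough, or by a disjointness argument on the resampled coordinates — and also verify that the stock of available fresh collisions never runs dry over the $2^k$ steps, which is where the hypothesis $k\le 2n-m$ (and hence $\Theta(2^{2n-m})$ collisions of $f$ in total) enters. Second, history-independence: the decorated data structure must be a function of $L$ alone so that the diffusion layer is a genuine walk operator; this is achievable with the usual balanced-tree / skip-list constructions but must be checked to be compatible with the $\CHECK$/$\EXTRACT$ interface and with qRAM-based $\poly$-time updates.

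The step I expect to be the main obstacle is the rigorous analysis of the degenerate walk: showing that a \emph{single} diffusion layer — rather than $\Theta(1/\sqrt\delta)$ Johnson steps — re-randomizes the vertex enough to decorrelate it from the just-extracted collision while keeping it $\exp$-close to uniform, and then propagating this guarantee through a chain of length $2^k$ without the (individually negligible) errors compounding. This takes place in a parameter regime ($\epsilon\to 1$, vertex size above $2^{m/2}$) in which the off-the-shelf MNRS guarantees do not apply directly, so the relevant operator-norm and closeness bounds have to be re-derived by hand, intertwined with the combinatorics of collisions in large random subsets.
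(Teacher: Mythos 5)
Your high-level architecture (large vertices of size $2^\ell$ with $\ell>m/2$, a chained walk reduced to a single diffusion layer per step, radix-tree/qRAM data structures, and the cost balance $2^\ell+2^{k+(m-\ell)/2}$) matches the paper, and your arithmetic recovers the right exponents. But the core per-step mechanism you propose is genuinely different from the paper's, and it has a gap precisely in the regime the theorem needs to cover. You resample \emph{one} coordinate of $L$ and amplitude-amplify on the event ``the new element collides with $L$'' (probability $\Theta(2^{\ell-m})$), harvesting \emph{one} collision per step at cost $2^{(m-\ell)/2}$. This is essentially extended BHT with a rolling list, and it inherits its barrier. Concretely: the amplification angle for a vertex $L$ depends on $N_L=\#\{x: f(x)\in f(L)\}\approx 2^{n+\ell-m}$, which fluctuates across the superposition with relative deviation $2^{-(n+\ell-m)/2}$; after post-selecting on success, the superposition over vertices is reweighted by a factor $1-\bigO{2^{-(n+\ell-m)}}$ that depends on $L$. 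Over $2^k$ steps the naive accumulated deviation from uniformity is $2^{k-(n+\ell-m)}$, which at $\ell=2k/3+m/3$ exceeds a constant exactly when $k>3n-2m$ --- i.e.\ exactly outside the range already covered by extended BHT. You correctly flag ``errors compounding over a chain of length $2^k$'' as the main obstacle, but this is not a technicality to be patched: with one collision per step there is no slack, and the same counting reappears however you slice it (e.g.\ the stock of elements of $L$ possessing an external partner is only a $2^{n-m}$ fraction of $L$ and is consumed at rate one per step).

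The paper avoids this by harvesting collisions \emph{internal} to the vertex, in batches of size equal to the standard deviation of their count. By \autoref{lemma:collisions}, a vertex of size $R$ has $E=\Theta(R^2/2^m)$ internal multicollisions with fluctuation $T=\Theta(R/2^{m/2})$; extracting $T$ of them moves the state from $\ket{V^{R,C}_{E,E+T}}$ to $\ket{V^{R,C}_{E-T,E}}$, i.e.\ from one constant-probability interval to another, and a \emph{single} MNRS reflection about the uniform superposition (cost $\bigO{\sqrt{R}}$ via phase estimation, not a single coordinate swap) combined with a projective measurement of the collision-count interval navigates back to $[E,E+T]$ in $\bigO{1}$ rounds via \autoref{cor:flip}. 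Two ingredients keep the state \emph{exactly} uniform over a well-defined vertex set at every step, so that nothing accumulates: the projective interval measurement in the walk step, and the dummy-value trick in the extraction step, which makes the probability of extracting a real collision independent of how many collisions the vertex contains (your scheme has no analogue of either, and the paper's ``issue with post-selection'' is exactly the bias you would need to control). The per-collision cost comes out the same, $\sqrt{R}/T=2^{m/2-\ell/2}$, but only the batch mechanism sustains it for $k$ up to $2n-m$. To repair your proof you would need to replace the single-coordinate resampling by the full diffusion operator and switch from one-at-a-time to standard-deviation-sized batch extraction, together with the uniformity-preserving extraction procedure; at that point you have reconstructed the paper's argument.
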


With the lower bound from \cite{DBLP:conf/eurocrypt/LiuZ19} this implies, as a direct corollary:
\begin{corollary}[Complexity of Multiple collision search]
 The query complexity of \autoref{pb:multicoll} is, up to a factor polynomial in $n$, $2^{m/3+2k/3}$.
\end{corollary}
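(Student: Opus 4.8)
The plan is to deduce the first bound from the second: setting $\ell = 2k/3 + m/3$ in the time--memory statement gives $k + m/2 - \ell/2 = 2k/3 + m/3$ and memory $2^{2k/3+m/3}$, which is exactly the memory-optimal bound. So fix an admissible $\ell$; I will build a quantum algorithm using $\bigOt{2^\ell}$ memory and $\bigOt{2^{k+m/2-\ell/2}}$ queries and gates. The genuinely new regime is $\ell \ge m/2$ (equivalently, by $\ell \le 2k/3+m/3$, the case $k \ge m/4$); the complementary small-$\ell$ regime, in which the vertices of the construction below carry no collisions and an honest walk is needed, is covered by (a variant of) the chained walk of~\cite{DBLP:conf/eurocrypt/BonnetainCSS23} and can be dispatched separately.

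The algorithm is a \emph{chained} MNRS walk with oversized vertices. A vertex is a subset $S \subseteq \zo^n$ with $|S| = r = \widetilde{\Theta}(2^\ell)$, carrying a history-independent data structure $D(S)$ storing the pairs $(x, f(x))$, $x \in S$, sorted by $f$-value so that the collisions internal to $S$ can be enumerated. The point of taking $r$ this large is that a uniform vertex contains $\binom{r}{2} 2^{-m} = \widetilde{\Theta}(2^{2\ell - m}) \ge 1$ collisions, and this count is tightly concentrated over the choice of $S$, so the stationary state $\ket\pi = \sum_S \sqrt{\pi_S} \ket{S}\ket{D(S)}$ is, throughout the execution, a superposition over vertices \emph{all} of which carry many collisions; a classical list $\mathcal{L}$ records the collisions already output. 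First, prepare $\ket\pi$ in $\bigOt{2^\ell}$ queries and gates --- the only place the full memory is touched. Then iterate \emph{phases}: each phase (i) extracts from the current superposition a batch of $2^c$ collisions internal to the vertex and absent from $\mathcal{L}$, measures them and appends them to $\mathcal{L}$, and (ii) applies a \emph{single diffusion layer} that replaces part of the vertex by fresh points colliding with the rest, so that a new batch of collisions appears and the state is pulled back to within $o(1)$ of $\ket\pi$, ready for the next phase. Because the ``marked'' fraction --- vertices still holding at least $2^c$ fresh collisions --- stays $1 - o(1)$ until $|\mathcal{L}|$ approaches the total $2^{2n-m}$, no amplitude amplification is needed inside a phase; and the diffusion, realized as a short Grover search for one of the $\approx r\, 2^{n-m}$ domain points that collide with the current vertex, delivers its $2^c$ fresh collisions at amortized cost $\bigOt{2^{m/2 - \ell/2}}$ each.

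Putting the pieces together, $\widetilde{O}(2^{k-c})$ phases produce $2^k$ distinct collisions in $\bigOt{2^\ell + 2^{k-c} \cdot (2^{c + m/2 - \ell/2} + 2^c)} = \bigOt{2^\ell + 2^{k + m/2 - \ell/2}}$ queries and gates, with $\bigOt{2^\ell}$ memory; the hypothesis $\ell \le 2k/3 + m/3$ (equivalently $\ell \le k + m/2 - \ell/2$) makes the second term dominate, giving the claimed bound, and the easy parameter checks are that one may choose $0 \le c \le 2\ell - m$, that $2^k \le 2^{2n-m}$ distinct collisions exist, and that $r \le 2^n$.

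The step I expect to be the real obstacle is making the single-diffusion phase rigorously \emph{chainable}: after measuring a whole batch of collisions --- which a priori pins several elements of the vertex --- one must show that one diffusion layer restores a state $o(1)$-close to $\ket\pi$, rather than one whose randomness has been irreversibly spent. This calls for an MNRS-style spectral analysis of the restricted walk, together with an accounting over the whole run that the number of \emph{fresh} collisions in the current vertex never drops below $\widetilde{\Omega}(2^{2\ell - m})$ --- a concentration argument using that $\mathcal{L}$ touches at most $2^{k+1} \le 2^{n+1}$ of the $2^n$ domain points, so it meets a re-randomized vertex's collision-neighbourhood in at most a constant fraction (the boundary $k = 2n-m$ costing only a further $\poly$). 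Controlling the accumulated $\ell_2$-error over the $\widetilde{O}(2^{k-c})$ chained phases, and specifying the history-independent data structure so that the diffusion operator is exactly well-defined on it, are the remaining delicate points.
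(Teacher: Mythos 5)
There are two genuine gaps. First, the statement you are proving is a \emph{tightness} claim: the query complexity of \autoref{pb:multicoll} \emph{is} $2^{m/3+2k/3}$ up to $\poly(n)$ factors, i.e.\ a matching upper \emph{and} lower bound. Your proposal only constructs an algorithm, so at best it establishes the upper bound (which is \autoref{thm:main}, not the corollary). The paper's proof of the corollary is a one-liner: combine \autoref{thm:main} with the query lower bound $\Omega(2^{2k/3+m/3})$ of Liu and Zhandry~\cite{DBLP:conf/eurocrypt/LiuZ19}. You never invoke or prove any lower bound, so half of the statement is simply missing.

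Second, even taken as a proof of the upper bound, your sketch defers exactly the point that constitutes the paper's actual technical content. You write that after measuring a batch of collisions, a single diffusion layer ``pulls the state back to within $o(1)$ of $\ket{\pi}$'' and that ``no amplitude amplification is needed inside a phase'' because the marked fraction is $1-o(1)$ --- and then you concede in your last paragraph that justifying this chainability is ``the real obstacle.'' It is, and it does not work as you describe: the batch size must be taken on the order of one standard deviation $T = \bigO{R/2^{m/2}}$ of the collision count (extracting fewer loses the amortization; extracting more lands on an exponentially atypical superposition), and by Chebyshev (\autoref{lemma:collisions}) the windows $[E-T,E]$ and $[E,E+T]$ each contain only a \emph{constant} fraction of vertices, not $1-o(1)$. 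The paper therefore does use amplitude amplification inside each phase: a constant expected number of applications of \autoref{cor:flip} to flip $\ket{V^{R,C}_{E-T,E}}$ into $\ket{V^{R,C}_{E,E+T}}$, plus \autoref{lemma:modif-e} to show the drifting mean $E$ stays within one standard deviation across phases, plus the dummy-padded extraction procedure to keep the post-measurement superposition \emph{uniform} (your measurement of a batch produces a non-uniform superposition weighted by the number of collisions in each vertex, which must be corrected). Your diffusion step, described as a Grover search for a domain point colliding with the current vertex, is also a different (BHT-like) operation from the paper's single MNRS diffusion layer and would bias the vertex distribution away from uniform, compounding the chaining problem. The arithmetic of your cost accounting is consistent with the paper's, but the argument that the chained state remains usable for $\widetilde{O}(2^{k-c})$ phases is precisely what is left unproved.
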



\paragraph{Applications in Symmetric Cryptography.}
In symmetric cryptography, the \emph{limited birthday} problem asks, given black-box oracle access to a permutation $E$ of $\{0,1\}^n$ (a block cipher), to find many pairs $P, P'$ such that $E(P) \oplus E(P')$ and $P \oplus P'$ both belong to a given vector subspace of $\{0,1\}^n$. It is an important part of impossible differential attacks~\cite{DBLP:conf/asiacrypt/BouraNS14}, and also serves as a distinguisher of random permutations~\cite{DBLP:journals/tosc/HosoyamadaNS20}. The limited birthday problem is solved by restricting the inputs of $E$ (or outputs) to \emph{structures} and solving a multiple collision search problem on such structures. Depending on the sizes of the restricting vector subspaces, one can encounter both cases with \emph{few} collisions to output, or all of them, making this a very versatile application~\cite{DBLP:journals/dcc/DavidNS24}.

\paragraph{Applications in Asymmetric Cryptography.}
Sieving algorithms are currently the fastest at solving the shortest vector problem on integer lattices, and are used to estimate the security of post-quantum lattice-based cryptosystems. They also exist in the cryptanalysis of code-based schemes~\cite{DBLP:conf/eurocrypt/DucasEEK24,DBLP:journals/dcc/EngelbertsEL25}. Sieving algorithms construct several lists of elements (e.g., vectors) where the next ones are obtained by finding partial collisions between the previous ones. Previously, the generic improvement of multiple collision-finding in~\cite{DBLP:conf/eurocrypt/BonnetainCSS23} allowed to improve (slightly) the previous best quantum algorithm for lattice sieving given in~\cite{CL21}. In this paper, we do not improve lattice sieving (since we only improve multiple-collision search in an adjacent regime), but we believe that our new technique could likely be used inside similar algorithms.

\paragraph{Finding multicollisions.} The problem of multicollision-finding in the quantum setting has only been studied in the case $n = m$~\cite{DBLP:conf/eurocrypt/LiuZ19}. As the known algorithms in this case start by computing a large number of collisions, a tight algorithm to find many collisions is likely a first step towards a tight quantum algorithm for multicollision-finding in the case $m > n$.


\section{Preliminaries}

We assume knowledge of the basics of quantum computing in the circuit model~\cite{nielsen2002quantum} such as qubits, quantum states, unitaries, \emph{etc.} In order to solve~\autoref{pb:multicoll} in the quantum setting, the function $f$ is accessed as a unitary oracle:
$$ \ket{x} \ket{y} \xmapsto{O_f} \ket{x} \ket{y \oplus f(x)} \enspace, $$
where $x$ is an $n$-bit input and $y$ an $m$-bit input.

We estimate the complexities of our algorithms asymptotically in $n,m$ and $k$, as follows. The \emph{query complexity} is the number of queries to $O_f$ (which also includes classical queries to $f$ as a special case). The \emph{time complexity} (or gate count) is the number of elementary quantum gates; the \emph{memory complexity} is the number of qubits or classical bits of memory. Since we are only interested in asymptotic complexities, any set of universal quantum gates is sufficient; one can take the Clifford+T gate set.

In order to reach optimal time complexities, we need \emph{quantum RAM} (qRAM). Different variants of qRAM are used in collision search algorithms: for example, the BHT algorithm requires only \emph{quantum-accessible classical memory}~\cite{DBLP:conf/latin/BrassardHT98}. In our case, we need \emph{quantum-accessible quantum memory}. Like previous works~\cite{DBLP:journals/siamcomp/Ambainis07}, we formalize our qRAM assumption by introducing the so-called qRAM gate:
$$ \ket{y_1, \ldots, y_M}  \ket{x} \ket{i} \xmapsto{\mathsf{qRAM}} \ket{y_1, \ldots, y_{i-1}, x, y_{i+1}, \ldots, y_M} \ket{y_i} \ket{i} \enspace. $$
We give cost 1 to the qRAM gate (like other elementary gates) even if $M$ is exponential-size; the qRAM gate can even span the entire quantum circuit. While this is obviously a very powerful model, it has no incidence on the query complexity, since a qRAM gate can always be simulated by an exponential-size Clifford+T circuit.

\subsection{Quantum Search}


Let $\ket{\psi_U}$ be a quantum state (the ``uniform superposition'') produced by some \textbf{Setup} algorithm, which is a superposition of two orthogonal components $\ket{\psi_G}$ (the ``good subspace'') and $\ket{\psi_B}$ (the ``bad subspace''):
$$ \ket{\psi_U} = \beta \ket{\psi_B} + \alpha \ket{\psi_G} $$
for some parameters $\alpha, \beta$. Let $Ref_{\ket{\psi_B}}$ be the reflection over $\ket{\psi_B}$, that performs:
$$ Ref_{\ket{\psi_B}} \ket{\psi_B} = \ket{\psi_B} , \quad Ref_{\ket{\psi_B}} \ket{\psi_G} = - \ket{\psi_G} \enspace.$$
We assume that this reflection can be performed by an oracle $O_f$; that is, there is a Boolean function $f$ defined on the outputs of \textbf{Setup} that returns 1 iff an output belongs to the good subspace. We can either consider a \emph{phase oracle} for $f$ that immediately implements $Ref_{\ket{\psi_B}}$, or a computational oracle.


Grover's algorithm~\cite{DBLP:conf/stoc/Grover96} and more generally amplitude amplification~\cite{brassard2002quantum}, consist in alternating reflections over $\ket{\psi_B}$, which are implemented using a \emph{checking oracle} that recognizes the components of $\ket{\psi_G}$, and reflections over $\ket{\psi_U}$, which can be implemented by calling the \textbf{Setup} algorithm. Indeed, it can be shown that $Ref_{\ket{\psi_U}} Ref_{\ket{\psi_B}}$ is a rotation of angle $2\arcsin \alpha$ in the plane spanned by $\ket{\psi_B}, \ket{\psi_G}$.


We view amplitude amplification as a way to transform a linear combination of $\ket{\psi_B}, \ket{\psi_G}$ into another. 

\begin{lemma}\label{lemma:qaa}
Assume that $\ket{\psi_U} = \beta \ket{\psi_B} + \alpha \ket{\psi_G}$. Suppose that $\alpha \leq \frac{1}{\sqrt{2}}$. 
Starting from $\ket{\psi_U}$ or $\ket{\psi_B}$, there exists $ k =\bigO{1 / \alpha}$ such that, after $k$ iterations of $Ref_{\ket{\psi_U}} Ref_{\ket{\psi_B}}$, the resulting state is a linear combination $\beta' \ket{\psi_B} + \alpha' \ket{\psi_G}$ where $\alpha' = \bigO{1}$.
\end{lemma}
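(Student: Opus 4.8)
The plan is to reduce the claim to elementary planar rotation geometry. Write $\ket{\psi_U} = \cos\theta\,\ket{\psi_B} + \sin\theta\,\ket{\psi_G}$ with $\theta := \arcsin\alpha$; the hypothesis $\alpha \le \tfrac{1}{\sqrt 2}$ is exactly what guarantees $\theta \in (0,\pi/4]$. As recalled just before the statement, $Q := Ref_{\ket{\psi_U}} Ref_{\ket{\psi_B}}$ acts on the two‑dimensional real plane $\operatorname{span}(\ket{\psi_B},\ket{\psi_G})$ as a rotation by angle $2\theta$, and both $\ket{\psi_B}$ and $\ket{\psi_U}$ lie in that plane. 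Fixing the sign of $\ket{\psi_G}$ so that the rotation sends $\ket{\psi_B}$ towards $\ket{\psi_G}$, the state after $k$ applications of $Q$ is $\cos\phi_k\,\ket{\psi_B} + \sin\phi_k\,\ket{\psi_G}$ with $\phi_k = 2k\theta$ when we start from $\ket{\psi_B}$ and $\phi_k = (2k+1)\theta$ when we start from $\ket{\psi_U}$. Hence $\alpha' = \sin\phi_k$, and it suffices to exhibit $k = \bigO{1/\alpha}$ for which $\phi_k$ is bounded away from $0$ and $\pi$ (I read ``$\alpha' = \bigO{1}$'' in the statement as the meaningful assertion that $\alpha'$ is bounded below by an absolute constant).

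I would then take $k$ to be the smallest nonnegative integer with $\phi_k \ge \pi/4$. The sequence $(\phi_k)_k$ starts at $\phi_0 \le \theta \le \pi/4$ and increases in steps of $2\theta \le \pi/2$, so the first index that reaches the threshold overshoots it by strictly less than $\pi/2$; therefore $\phi_k \in [\pi/4,\, 3\pi/4)$, on which $\sin\phi_k \ge \sin(\pi/4) = \tfrac{1}{\sqrt 2} = \bigOg{1}$. Minimality gives $\phi_{k-1} < \pi/4$, i.e. (in both the $\ket{\psi_B}$ and the $\ket{\psi_U}$ case) $k = \bigO{1/\theta}$; and since $\arcsin\alpha \ge \alpha$ on $[0,1]$ we have $1/\theta \le 1/\alpha$, so $k = \bigO{1/\alpha}$, as required.

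The only points needing a second look are minor. First, the boundary case $\theta \ge \pi/4$, which forces $\alpha = \tfrac{1}{\sqrt 2}$ and may select $k = 0$ when starting from $\ket{\psi_U}$: there $\alpha' = \sin\theta = \alpha = \tfrac{1}{\sqrt 2}$ is itself $\bigOg{1}$, while from $\ket{\psi_B}$ one has $\phi_0 = 0 < \pi/4$, so $k \ge 1$ automatically (and $k \ge 1$ in all other cases too, which is what makes $\phi_{k-1}$ well defined above). Second, the orientation convention for $Q$ is just a choice of sign on $\ket{\psi_G}$ and does not affect $|\alpha'|$. I do not expect a genuine obstacle here: the argument is the standard amplitude‑amplification analysis, and the sole role of the hypothesis $\alpha \le \tfrac{1}{\sqrt 2}$ is to keep the rotation step $2\theta \le \pi/2$, so that the orbit of angles cannot jump over the ``good window'' $[\pi/4,\, 3\pi/4]$.
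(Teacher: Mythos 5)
Your proposal is correct and follows essentially the same route as the paper: both reduce to tracking the angle $\phi_k = (2k+1)\arcsin\alpha$ (or $2k\arcsin\alpha$ when starting from $\ket{\psi_B}$) under the rotation $Ref_{\ket{\psi_U}}Ref_{\ket{\psi_B}}$ by $2\arcsin\alpha$ in the plane spanned by $\ket{\psi_B},\ket{\psi_G}$, with the hypothesis $\alpha \le 1/\sqrt{2}$ controlling the step size. The only difference is in the final step: the paper exhibits a suitable $k$ via a Taylor-type lower bound on $\sin$, whereas you take the first $k$ with $\phi_k \ge \pi/4$ and note the overshoot stays below $3\pi/4$ — a slightly cleaner way to finish the same argument.
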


\begin{proof}
We have:
$$ \ket{\psi_U} = \sin \theta \ket{\psi_G} + \cos \theta \ket{\psi_B} $$
where $\theta = \arcsin \alpha$, so after $k$ iterations we obtain:
\[ \cos( (\arcsin \alpha) + 2k (\arcsin \alpha) ) \ket{\psi_B} + \sin( (\arcsin \alpha) + 2k(\arcsin \alpha) ) \ket{\psi_G} \enspace. \]
We have:
\begin{align*}
\sin( (\arcsin \alpha) + 2k(\arcsin \alpha) ) &\geq (2k+1) \arcsin \alpha - ((2k+1) \arcsin \alpha )^2 / 6 \\
& \geq (2k+1) \arcsin \alpha - ((2k+1) \alpha \pi )^2 / 24 \enspace.
\end{align*}
This can be made a constant by choosing $k = \bigO{1 / \alpha}$. If we start from $\ket{\psi_B}$, the initial angle is 0 instead, but the asymptotic behavior is the same. \qed
\end{proof}

One can be more precise with the number of iterations required to achieve $\alpha'$ close to 1, but this is not necessary for us since we focus on asymptotic complexities only.


\begin{lemma}\label{lemma:qaa-bad}
Suppose that $\alpha \leq \frac{1}{\sqrt{2}}$. Starting from $\ket{\psi_G}$, there exists $k = \bigO{1 / \alpha}$ such that after $k$ iterations of $Ref_{\ket{\psi_U}} Ref_{\ket{\psi_B}}$, the resulting state is a linear combination $\beta' \ket{\psi_B} + \alpha' \ket{\psi_G}$ where $\beta' = \bigO{1}$.
\end{lemma}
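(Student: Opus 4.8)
The plan is to reuse the rotation picture from the proof of Lemma~\ref{lemma:qaa}. As established there, $Ref_{\ket{\psi_U}} Ref_{\ket{\psi_B}}$ acts as a rotation by angle $2\theta$, with $\theta = \arcsin\alpha$, in the two-dimensional real plane spanned by $\ket{\psi_B}$ and $\ket{\psi_G}$, where $\ket{\psi_U} = \cos\theta\,\ket{\psi_B} + \sin\theta\,\ket{\psi_G}$. The only change is the starting point: instead of $\ket{\psi_U}$ (at angle $\theta$ from the $\ket{\psi_B}$ axis) or $\ket{\psi_B}$ (at angle $0$), we now start from $\ket{\psi_G}$, which sits at angle $\pi/2$ from the $\ket{\psi_B}$ axis.

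First I would write the state after $k$ iterations as $\cos(\pi/2 + 2k\theta)\,\ket{\psi_B} + \sin(\pi/2 + 2k\theta)\,\ket{\psi_G}$, so that the coefficient on the bad subspace is $\beta' = \cos(\pi/2 + 2k\theta) = -\sin(2k\theta)$. The goal is to make $|\beta'|$ a constant, i.e.\ to rotate the angle by roughly $\pi/2$ so as to land near the $\ket{\psi_B}$ axis. Since each step advances the angle by $2\theta \ge 2\alpha$ and $\alpha \le 1/\sqrt 2$ bounds $2\theta$ away from $\pi$, there is a choice of $k = \bigO{1/\theta} = \bigO{1/\alpha}$ for which $2k\theta$ lies in, say, $[\pi/4, 3\pi/4]$; for such $k$ we get $|\sin(2k\theta)| \ge \sin(\pi/4) = 1/\sqrt 2$, hence $\beta' = \bigO{1}$ (indeed bounded below by a constant). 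A short argument that such a $k$ exists: the sequence $2\theta, 4\theta, 6\theta, \dots$ has consecutive gaps $2\theta \le \pi/2$ (using $\alpha \le 1/\sqrt2 < 1$, so $\theta \le \pi/4$), hence it cannot jump over the whole interval $[\pi/4,3\pi/4]$ of length $\pi/2$, so some multiple lands inside it; and the first such multiple is at most $3\pi/4 \cdot \tfrac{1}{2\theta} = \bigO{1/\alpha}$.

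This is essentially the same computation as in Lemma~\ref{lemma:qaa} with the roles of the sine and cosine coordinates interchanged, so no genuinely new obstacle arises. The only mild subtlety, as in the previous proof, is being careful that $2\theta$ is bounded away from $0$ and from $\pi$: the lower bound $2\theta \ge 2\alpha$ controls the step size from below (giving $k = \bigO{1/\alpha}$), and $\alpha \le 1/\sqrt 2$ guarantees $2\theta < \pi$ so the rotation is nondegenerate. Since we only care about asymptotics, I would not optimize the constants and would simply conclude with $k = \bigO{1/\alpha}$ and $\beta' = \bigO{1}$. \qed
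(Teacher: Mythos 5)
Your proposal is correct and follows essentially the same route as the paper: both track the angle in the $\ket{\psi_B},\ket{\psi_G}$ plane starting from $\pi/2$, write the state after $k$ steps as $\cos(\pi/2+2k\theta)\ket{\psi_B}+\sin(\pi/2+2k\theta)\ket{\psi_G}$, and choose $k=\bigO{1/\alpha}$ to make the $\ket{\psi_B}$ amplitude constant. Your explicit pigeonhole argument that some multiple of $2\theta$ lands in $[\pi/4,3\pi/4]$ is slightly more detailed than the paper's one-line conclusion, but the content is the same.
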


\begin{proof}
The initial angle in the plane spanned by $\ket{\psi_B}, \ket{\psi_G}$ is $\frac{\pi}{2}$ in that case, so after $k$ iterations we have:
\begin{multline*}
 \cos( \pi/2 + 2k (\arcsin \alpha) ) \ket{\psi_B} + \sin(\pi/2 + 2k(\arcsin \alpha) ) \ket{\psi_G} \\
 = \sin( 2k (\arcsin \alpha) ) \ket{\psi_B} - \cos( 2k(\arcsin \alpha) ) \ket{\psi_G} \enspace.
 \end{multline*}
Thus, by choosing $k = \bigO{1/\alpha}$ we get a constant amplitude on $\ket{\psi_B}$.\qed
\end{proof}

In both cases, we navigate in the plane spanned by $\ket{\psi_G}, \ket{\psi_B}$. If we assume that one of them (their roles can be exchanged) represents a ``typical'' case and the other an ``atypical'' case (with an amplitude $\alpha$ in the uniform superposition), then \autoref{lemma:qaa} shows that we can move from the typical case to the atypical one with $\bigO{1/\alpha}$ iterations. Conversely, \autoref{lemma:qaa-bad} shows that one can move from the atypical case back to the typical one with $\bigO{1/\alpha}$ iterations.



After performing the iterations, we will project the result and see if we obtain $\ket{\psi_B}$ or $\ket{\psi_G}$. If we did not obtain the wanted state, we can just restart. The average number of repetitions will be constant. The result is summarized in~\autoref{cor:flip}, and the procedures are given in~\autoref{alg:search}, which allows to transform $\ket{\psi_B}$ into $\ket{\psi_G}$, \emph{and the converse}, at the same cost.



\begin{corollary}\label{cor:flip}
Assume that $\ket{\psi_U} = \beta \ket{\psi_B} + \alpha \ket{\psi_G}$ where $\alpha$ is at most a constant. There are two quantum procedures using an expected $\bigO{1/\alpha}$ calls to $ (Ref_{\ket{\psi_U}} Ref_{\ket{\psi_B}})$ that respectively:
\begin{itemize}
\item On input $\ket{\psi_B}$ or $\ket{\psi_U}$, return $\ket{\psi_G}$;
\item On input $\ket{\psi_G}$, return $\ket{\psi_B}$.
\end{itemize}
\end{corollary}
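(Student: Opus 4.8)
The plan is to derive the corollary directly from \autoref{lemma:qaa} and \autoref{lemma:qaa-bad}, wrapping the amplitude-amplification iterations in the standard ``measure and restart'' loop to turn the near-perfect rotations of those lemmas into procedures that output the target state exactly.

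For the first procedure, on input $\ket{\psi_B}$ or $\ket{\psi_U}$, I would apply $k = \bigO{1/\alpha}$ iterations of $(Ref_{\ket{\psi_U}} Ref_{\ket{\psi_B}})$ as prescribed by \autoref{lemma:qaa}, obtaining a state $\beta' \ket{\psi_B} + \alpha' \ket{\psi_G}$ whose good-subspace amplitude $\alpha'$ is bounded below by a constant. Then I would perform the projective measurement distinguishing the good and bad subspaces --- implementable with one more call to the Boolean function $f$ underlying $Ref_{\ket{\psi_B}}$ --- and read off the outcome. With probability $|\alpha'|^2 = \Omega(1)$ the post-measurement state is exactly $\ket{\psi_G}$ and we are done; otherwise the post-measurement state is $\ket{\psi_B}$, which is itself an admissible input of the same procedure, so we simply restart. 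Since each attempt succeeds with constant probability, the expected number of attempts, hence the expected number of calls to $(Ref_{\ket{\psi_U}} Ref_{\ket{\psi_B}})$, is $\bigO{1/\alpha}$.

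For the second procedure, on input $\ket{\psi_G}$, I would do the symmetric thing: apply $k = \bigO{1/\alpha}$ iterations as in \autoref{lemma:qaa-bad} to reach a state whose bad-subspace amplitude $\beta'$ is a constant, then measure good versus bad. On the ``bad'' outcome, which occurs with probability $\Omega(1)$, the state collapses to $\ket{\psi_B}$ and we output it; on the ``good'' outcome the state collapses back to $\ket{\psi_G}$, exactly the input we started from, so we restart. Again the expected number of repetitions is constant, giving the claimed expected query count. The resulting procedures are those of \autoref{alg:search}.

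There is no substantial obstacle here; the only point requiring a moment's care is the self-consistency of the restart loop --- one must check that the post-measurement state on a failed attempt is again a legitimate input to the same procedure (it is: $\ket{\psi_B}$ for the first procedure, $\ket{\psi_G}$ for the second), so that no extra \textbf{Setup} call is needed between attempts. When the input is $\ket{\psi_U}$ rather than $\ket{\psi_B}$ in the first procedure, a failed attempt still lands in $\ket{\psi_B}$, which remains admissible, so the same remark applies.
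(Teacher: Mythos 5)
Your proposal is correct and is essentially identical to the paper's own argument: the paper likewise obtains \autoref{cor:flip} by applying \autoref{lemma:qaa} (resp.\ \autoref{lemma:qaa-bad}), measuring to project onto $\ket{\psi_B}$ or $\ket{\psi_G}$, and restarting on failure, exactly as formalized in \autoref{alg:search}. Your added remark on the self-consistency of the restart loop is a point the paper leaves implicit, but it changes nothing of substance.
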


\begin{algorithm}[htbp]
\caption{Procedure $\mathsf{Search}(\ket{\psi_B}, \ket{\psi_G}, \ket{\psi_U})$.}\label{alg:search}

\begin{algorithmic}[1]
\Statex Define: $\ket{\psi_U} = \beta \ket{\psi_B} + \alpha \ket{\psi_G}$ where $\ket{\psi_B}$ and $\ket{\psi_G}$ are orthogonal; an oracle $f$ to recognize states in $\ket{\psi_G}$
\medskip
\Statex \textbf{Input:} $\ket{\psi} \leftarrow \ket{\psi_B}$ or $\ket{\psi_U}$ (resp. $\ket{\psi_G}$)
\Statex \textbf{Output:} $\ket{\psi} \leftarrow\ket{\psi_G}$ (resp. $\ket{\psi_B}$)
\medskip
\State \textsf{Success} $\leftarrow$ False
\While{Not \textsf{Success}}
\State $\ket{\psi} \leftarrow (Ref_{\ket{\psi_U}} Ref_{\ket{\psi_B}})^{1/\alpha} \ket{\psi}$
\State Compute $f$: $\ket{\psi} \leftarrow \ket{\psi}\ket{f(\psi)}$
\State Measure the last bit (projects on $\ket{\psi_B}$ or $\ket{\psi_G}$)
\State If we obtained the wanted state, \textsf{Success} $\leftarrow$ True
\EndWhile
\State \textbf{Return} $\ket{\psi}$
\end{algorithmic}

\end{algorithm}

\subsection{MNRS Quantum Walks}

The MNRS quantum walk framework~\cite{DBLP:journals/siamcomp/MagniezNRS11} can be seen as an extension of quantum amplitude amplification where the implementation of $Ref_{\ket{\psi_U}}$ differs from the \textbf{Setup} algorithm.  We recall here some essential features, but for a more comprehensive account, one can refer to~\cite{DBLP:conf/eurocrypt/BonnetainCSS23}.

One considers a regular undirected graph $G = (V,E)$ with degree $d$, vertex set $V$ and edge set $E$, where a proportion $\epsilon$ of the vertices are \emph{marked} (noted $M \subseteq V$). The goal of the quantum walk is to recover a marked vertex. Assume that an encoding of vertices as orthogonal quantum states is given, denoted as $\ket{\hat{x}}$ for $x \in V$. For $x \in V$, let $N_x$ be the set of its neighbors. The MNRS quantum walk is analogous to a random walk on \emph{edges} of the graph $G$. One defines:
\begin{align*}
\ket{p_x} &:= \frac{1}{\sqrt{d}} \sum_{y \in N_x} \ket{\hat{y}} \\
\ket{\psi_U} &:= \frac{1}{\sqrt{V}} \sum_{x \in V} \ket{\hat{x}} \ket{p_x}, \quad & \ket{\psi_M} & := \frac{1}{\sqrt{M}} \sum_{x \in M} \ket{\hat{x}} \ket{p_x} \\
A & := \mathrm{span} ( \ket{\hat{x}} \ket{p_x} )_{x \in V}, \quad & B & := \mathrm{span} ( \ket{p_x} \ket{\hat{x}} )_{x \in V}
\end{align*}

The MNRS framework emulates $Ref_{\ket{\psi_U}}$ by a \emph{phase estimation} of the unitary $Ref_B Ref_A$. This phase estimation requires $\bigO{}(1/\sqrt{\delta})$ calls to both $Ref_B$ and $Ref_A$, where $\delta$ is the \emph{spectral gap} of the graph $G$. The spectral gap is related to classical random walks: $\bigO{}(1/\delta)$ is the number of random walk steps to take, starting from any vertex, before the current vertex becomes uniformly distributed.

Both $Ref_B$ and $Ref_A$ can be implemented from a unitary that, on input $\ket{\hat{x}}$, produces the superposition of neighbors $\ket{p_x}$. This is what is commonly known as the \textbf{Update} unitary. Finally, one needs a \textbf{Check} unitary that flips the phase of bad vertices. The algorithm is often constructed so that \textbf{Check} has a trivial implementation. By applying~\autoref{alg:search} on the output of \textbf{Setup}, we obtain the following.

\begin{theorem}[MNRS (informal)]
There is a quantum algorithm that, using 1 call to \textbf{Setup}, and on expectation $\bigO{\frac{1}{\sqrt{\epsilon}} \frac{1}{\sqrt{\delta}} }$ calls to \textbf{Update} and $\bigO{\frac{1}{\sqrt{\epsilon}} }$ calls to \textbf{Check}, returns $\ket{\psi_M}$.
\end{theorem}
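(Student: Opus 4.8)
The plan is to realize the algorithm as the amplitude amplification of \autoref{lemma:qaa}, whose single nonstandard ingredient is the implementation of $Ref_{\ket{\psi_U}}$: instead of re-running \textbf{Setup} at every step (correct, but charging one Setup per iteration), we emulate it by phase estimation of the walk operator $W := Ref_B Ref_A$. First I would fix the two-component decomposition. Since a proportion $\epsilon$ of vertices are marked, restricting $\ket{\psi_U}$ to $x \in M$ gives exactly $\sqrt{\epsilon}\,\ket{\psi_M}$, so we may write $\ket{\psi_U} = \sqrt{1-\epsilon}\,\ket{\psi_{\bar M}} + \sqrt{\epsilon}\,\ket{\psi_M}$, with $\ket{\psi_{\bar M}}$ the normalized restriction to the unmarked vertices. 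This places us in the setting of \autoref{lemma:qaa} with good subspace $\ket{\psi_G} = \ket{\psi_M}$, bad subspace $\ket{\psi_B} = \ket{\psi_{\bar M}}$, and amplitude $\alpha = \sqrt{\epsilon}$.

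The reflection $Ref_{\ket{\psi_B}}$ is merely a phase flip on the marked component, hence implemented by one call to \textbf{Check}, which by assumption recognizes the $\ket{\hat x}$ with $x \in M$. \autoref{lemma:qaa}, together with \autoref{cor:flip} and \autoref{alg:search}, then guarantees that starting from the output $\ket{\psi_U}$ of a single \textbf{Setup}, some $\bigO{1/\sqrt{\epsilon}}$ iterations of $Ref_{\ket{\psi_U}} Ref_{\ket{\psi_B}}$ raise the amplitude on $\ket{\psi_M}$ to a constant, after which a projection returns $\ket{\psi_M}$ with constant success probability and hence a constant expected number of restarts. Charging one \textbf{Check} per iteration already accounts for the claimed $\bigO{1/\sqrt{\epsilon}}$ \textbf{Check} calls and the single \textbf{Setup} call.

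The crux is to emulate $Ref_{\ket{\psi_U}}$ cheaply inside each iteration. Here I would invoke the spectral structure of $W = Ref_B Ref_A$ established in~\cite{DBLP:journals/siamcomp/MagniezNRS11}. Each of $Ref_A, Ref_B$ is built from a single use of \textbf{Update}, the unitary sending $\ket{\hat x}\ket{0} \mapsto \ket{\hat x}\ket{p_x}$: one has $Ref_A = \mathrm{Update}\,(I \otimes (2\ket{0}\!\bra{0} - I))\,\mathrm{Update}^\dagger$, and $Ref_B$ is the same operator conjugated by the swap of the two registers, so each reflection costs $\bigO{1}$ \textbf{Update} calls. The decisive fact is that $\ket{\psi_U}$ is the unique $+1$-eigenvector of $W$ in the relevant invariant subspace, while every eigenvector orthogonal to it carries an eigenphase bounded away from $0$ by $\Omega(\sqrt{\delta})$, the square root of the spectral gap --- a consequence of the correspondence between the eigenphases of $W$ and the $\arccos$ of the eigenvalues of the classical walk. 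Phase estimation of $W$ to precision $\bigO{\sqrt{\delta}}$, which uses $\bigO{1/\sqrt{\delta}}$ applications of $W$ and hence $\bigO{1/\sqrt{\delta}}$ \textbf{Update} calls, therefore separates $\ket{\psi_U}$ (phase $0$) from its complement; flipping the global sign conditioned on a nonzero phase register realizes an approximation of $Ref_{\ket{\psi_U}}$. Multiplying the $\bigO{1/\sqrt{\epsilon}}$ outer iterations by the $\bigO{1/\sqrt{\delta}}$ \textbf{Update} calls per emulated reflection yields the stated $\bigO{\frac{1}{\sqrt{\epsilon}}\frac{1}{\sqrt{\delta}}}$ total \textbf{Update} count.

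I expect the main obstacle to be the error analysis of the emulated reflection. Phase estimation implements $Ref_{\ket{\psi_U}}$ only approximately, so the clean reflection picture of \autoref{lemma:qaa} must be replaced by an approximate-reflection argument: one has to show that, by choosing the phase-estimation precision with an extra logarithmic number of ancillas, the per-iteration error is small enough that it does not accumulate destructively over the $\bigO{1/\sqrt{\epsilon}}$ amplification steps. For the present informal statement this is exactly the content of the MNRS analysis and can be cited from~\cite{DBLP:journals/siamcomp/MagniezNRS11}; the spectral-gap bound underlying the phase separation likewise comes from there.
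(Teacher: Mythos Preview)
Your proposal is correct and follows essentially the same approach as the paper: the paper's brief justification is precisely that $Ref_{\ket{\psi_U}}$ is emulated by phase estimation of $Ref_B Ref_A$ at cost $\bigO{1/\sqrt{\delta}}$ \textbf{Update} calls, $Ref_{\ket{\psi_B}}$ is implemented by \textbf{Check}, and one then applies \autoref{alg:search} to the output of \textbf{Setup}. Your write-up is in fact more detailed than what the paper provides (which treats this as a cited preliminary and gives only the sketch preceding the theorem), and your identification of the approximate-reflection error analysis as the technical content to be cited from~\cite{DBLP:journals/siamcomp/MagniezNRS11} is exactly right.
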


While this point of view suffices for query complexity, the state of a vertex is a very large one. Thus, keeping two entire vertices in memory is wasteful, and having to reconstruct an entire vertex during the \textbf{Update} is too costly.

\paragraph{Johnson Graph and Radix Trees.}
Most MNRS quantum walks in cryptanalysis use \emph{Johnson graphs}, for which this problem can be solved generically: the vertex-vertex pair can be compressed to a single vertex and a \emph{coin} indicating the next one. This is detailed in~\cite{DBLP:conf/eurocrypt/BonnetainCSS23}.

A Johnson graph $J(X,R)$ is the graph where the vertices are the subsets of $X$ containing $R$ elements. Two sets are adjacent if they have exactly one differing element. In collision search and similar algorithms, the vertex $x$ is encoded into $\ket{\hat{x}}$ by using a specific set data structure, and possibly computing additional information. In our case we store pairs $(z,f(z))$ as $(f(z)| z)$ using a concatenation of bit-strings.

The quantum radix trees~\cite{DBLP:phd/basesearch/Jeffery14} allow both to represent efficiently an ordered set and to perform complex operations efficiently (assuming qRAM), such as adding an element, removing an element and finding an element. We can also augment the radix trees with useful information such as the number of collisions that each branch contains (in order to quickly find collisions, without having to write them down). All of this side information is part of the encoding $\ket{\hat{x}}$, and also makes trivial any implementation of \textbf{Check} that would rely on the number of collisions in a vertex.

\subsection{Statistics on the Number of Collisions}

In the remainder of this paper, we will work with \emph{multicollision tuples} rather than collision pairs. By multicollision, we mean a unique tuple of elements which all have the same image by $f$ (for example, a 3-collision is not counted as a pair of collisions). Our algorithm will return such tuples.

In a random function on $n$ bits, the largest size of a multicollision is expected to be $\bigO{n}$. This means that up to a polynomial factor, the number of multicollisions and the number of collisions returned by our algorithm is the same; which is why we can focus on the former. This is more practical and equivalent to~\autoref{pb:multicoll} up to a polynomial factor. The exceptional case (a function having a larger multicollision than expected by this bound) is taken as a case of failure of our algorithm.

\begin{lemma}[{\cite[Theorem~2]{DBLP:journals/ieicet/SuzukiTKT08}}]
Let be $n,m,\ell$ with $n \leq m$. Then the probability $P(n,m,\ell)$ that a random function $f:X \rightarrow Y$ where $|X|=2^n$ and $|Y|=2^m$ has a $\ell$-multicollision satisfies $P(n,m,\ell) \leq 2^{-m(\ell-1)} \cdot \binom{2^n}{\ell}.$
\end{lemma}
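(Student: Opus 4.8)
The plan is to prove this by a straightforward union bound over all candidate $\ell$-tuples of inputs. First I would fix an arbitrary set $\{x_1, \ldots, x_\ell\}$ of $\ell$ distinct elements of $X$ and compute the probability that $f$ maps all of them to a common value. Since $f$ is drawn uniformly at random, the values $f(x_1), \ldots, f(x_\ell)$ are independent and uniform over $Y$, so
\[
  \Pr[f(x_1) = \cdots = f(x_\ell)] = \sum_{y \in Y} \Pr[f(x_1) = y, \ldots, f(x_\ell) = y] = |Y| \cdot |Y|^{-\ell} = 2^{-m(\ell - 1)} \enspace.
\]

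Next I would take a union bound over the $\binom{2^n}{\ell}$ choices of such a set. The event that $f$ has an $\ell$-multicollision is the union, over all size-$\ell$ subsets $S \subseteq X$, of the event that all elements of $S$ share a common image; a multicollision of size larger than $\ell$ contains an $\ell$-element subset, so no tuples of size $>\ell$ need to be considered separately. Applying the union bound to the probability computed above yields $P(n,m,\ell) \leq \binom{2^n}{\ell} \cdot 2^{-m(\ell-1)}$, which is exactly the claimed inequality.

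I expect no real obstacle here: the argument needs neither independence across subsets nor any inclusion--exclusion refinement, since the union bound goes in the correct (one-sided) direction. The only point requiring a little care is the convention for ``$\ell$-multicollision'': if it means the existence of at least $\ell$ colliding inputs, the union over exactly-$\ell$-subsets already covers every case; if it means exactly $\ell$, the event only shrinks, so the same bound holds. Either way the statement follows.
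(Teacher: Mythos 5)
Your argument is correct. Note that the paper does not prove this lemma at all: it is imported verbatim as Theorem~2 of the cited reference \cite{DBLP:journals/ieicet/SuzukiTKT08}, so there is no in-paper proof to compare against. Your union bound is the standard (and essentially the only natural) way to establish it: for a fixed $\ell$-subset the colliding probability is $|Y|\cdot|Y|^{-\ell}=2^{-m(\ell-1)}$ by independence of the $f(x_i)$, and summing over the $\binom{2^n}{\ell}$ subsets gives the bound. Your two side remarks are also the right ones to make --- the one-sidedness of the union bound means no inclusion--exclusion is needed, and either convention for ``$\ell$-multicollision'' is covered since a larger multicollision contains an $\ell$-subset with a common image. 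No gap.
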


\begin{corollary}
There exist constants $\epsilon$ and $c$ such that with probability $1 - \epsilon$, all multicollisions in a random function are tuples of size $\leq cn$. 
\end{corollary}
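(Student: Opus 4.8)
The plan is to combine the preceding lemma's bound $P(n,m,\ell) \le 2^{-m(\ell-1)}\binom{2^n}{\ell}$ with a union bound over all multicollision sizes $\ell$ larger than $cn$. First I would simplify the per-size estimate: since $m \ge n$ and $\binom{2^n}{\ell} \le 2^{n\ell}/\ell!$, we get
$P(n,m,\ell) \le 2^{-n(\ell-1)} \cdot 2^{n\ell}/\ell! = 2^n/\ell!$,
uniformly in $m$. The $\ell!$ in the denominator is essential here: the cruder bound $\binom{2^n}{\ell}\le 2^{n\ell}$ only yields $P(n,m,\ell)\le 2^{m}\,2^{\ell(n-m)}$, which does not decay at all in the boundary case $m=n$.

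Next, by the union bound, the probability that $f$ admits some $\ell$-multicollision with $\ell > cn$ is at most $\sum_{\ell = \lceil cn\rceil+1}^{2^n} P(n,m,\ell) \le 2^n \sum_{\ell > cn} 1/\ell!$. A geometric tail estimate for the exponential series gives $\sum_{\ell > cn} 1/\ell! \le 2/(\lceil cn\rceil)!$ whenever $cn \ge 1$, so the failure probability is at most $2^{n+1}/(\lceil cn\rceil)!$.

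Finally I would apply Stirling's bound $(cn)! \ge (cn/e)^{cn}$, which turns the failure probability into at most $2^{n+1}(e/(cn))^{cn}$, i.e. a quantity whose base-$2$ logarithm is at most $(n+1) - cn\log_2(cn/e)$. For any sufficiently large absolute constant $c$ this is negative for every $n \ge 1$ and tends to $-\infty$ as $n\to\infty$; in particular it is bounded above by some constant $\epsilon < 1$ (it is even $o(1)$, so $\epsilon$ can be made arbitrarily small by enlarging $c$), which gives the claim. Measuring ``the largest multicollision has size $>cn$'' against this event shows the corollary.

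The one genuinely delicate point is the treatment of the binomial coefficient: one must retain the $1/\ell!$ factor in order to cover the boundary case $m=n$ (for $m$ strictly larger than $n$ the extra decay $2^{-\ell(m-n)}$ already suffices on its own), and then one should check that the resulting bound holds for all $n\ge 1$, not merely asymptotically — which is why I would keep the constant $c$ explicit and verify the few small values of $n$ directly against $(cn)!$. Everything else is a routine application of Stirling's formula and a geometric series bound.
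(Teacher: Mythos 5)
Your proof is correct and takes essentially the same route as the paper's: the paper bounds $\binom{2^n}{\ell} \leq (e2^n/\ell)^\ell$ and uses $m \geq n$ to get a per-size bound of the form $2^n(e/\ell)^\ell$, then union-bounds over $\ell \geq cn$ — which is exactly your $2^{n\ell}/\ell!$ estimate combined with Stirling, just packaged differently. Your version is somewhat more explicit about the tail of the union bound and the small-$n$ regime, but there is no substantive difference in the argument.
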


\begin{proof}
Let $\ell =cn$. By applying the previous lemma and the standard bound $\binom{2^n}{\ell} \leq (e 2^n/\ell)^\ell$, we get $P(n,m,\ell) \leq \left( 2^{-n(1-1/\ell)} \frac{e 2^n}{\ell} \right)^{cn}$ as $n \leq m$. For large $n$, the union bound over all $\ell \geq cn$ shows that a multicollision of size larger than $cn$ exists with negligible probability $\epsilon \leq 2^{-\Omega(n)}$. \qed
\end{proof}	

Let us denote $X = \{0,1\}^n$ and $Y = \{0,1\}^m$. As our main algorithm considers mutiple walks, we will need to define reduced sets $X'$ and $Y'$ and a corresponding function $f'~: X' \to Y'$ which is simply a restriction of $f$.

We are interested in the number of multicollisions that can appear in a vertex. The vertex is a set $S'$ of elements taken uniformly at random from $X'$, with $|S'| < |X'|$. All three of $|S'|$, $|X'|$ and $|Y'|$ can be varying parameters, but we will ensure that they do not vary too much throughout the algorithm, so that our statistics remain valid from one walk to another.

We use the following heuristic.

\begin{heuristic}\label{heuristic: nb multicollisions}
The number of multicollisions in a set follows a Poisson distribution. When its expectation is large, the number of multicollisions follows a normal distribution.
\end{heuristic}

Under this heuristic, we obtain the following result, which is our main tool to handle the statistics of collisions in vertices.

\begin{lemma}\label{lemma:collisions}
Let $Z(S')$ be the number of multicollisions (of $f'$) in the set $S'$, with elements selected uniformly at random from $X'$. Assume that $R/2 \leq |S'| \leq R$ and $M/2 \leq |Y'| \leq M$ for some global parameters $R$ and $M$, where $8R < M$. There is a constant $c$ such that, for any pair of constants $c_1, c_2$, the probability: 
\begin{equation}
\prob[S']{ c \frac{|S'|^2}{|Y'|} - c_1 \frac{R}{\sqrt{M}}  \leq Z(S') \leq c \frac{|S'|^2}{|Y'|} + c_2 \frac{R}{\sqrt{M}}  }
\end{equation}
is also constant.
\end{lemma}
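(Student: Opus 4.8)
The plan is to compute the expectation and variance of $Z(S')$ and then invoke \autoref{heuristic: nb multicollisions} (or, more honestly, a concentration statement it abstracts) to conclude. First I would observe that conditioned on $|S'| = s$, the expected number of multicollisions of a fixed size $j \geq 2$ is $\binom{s}{j}$ times the probability that $j$ fixed distinct elements of $X'$ all map to the same value under $f'$, which is $(1/|Y'|)^{j-1}$ up to corrections. Summing over $j$, the dominant contribution comes from $j=2$: we get $\Exp{}{Z(S') \mid |S'|=s} = \binom{s}{2}/|Y'| \cdot (1 + o(1)) = c\, s^2/|Y'|$ for an absolute constant $c$ (here $c = 1/2$ in the idealized count, but allowing the ``$+1$'' shifts and higher-order terms I just keep it as a constant). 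The condition $8R < M$, i.e.\ $|S'| \ll |Y'|$, is exactly what guarantees that the $j \geq 3$ terms are lower order, so that the whole sum is $\Theta(s^2/|Y'|)$ with the same leading constant $c$ across all walks. Since $R/2 \le |S'| \le R$ and $M/2 \le |Y'| \le M$, the quantity $c\,|S'|^2/|Y'|$ stays within a constant factor of $R^2/M$ throughout.

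Next I would bound the variance. Writing $Z(S') = \sum_{T} \mathbf{1}[T \text{ is a multicollision}]$ where $T$ ranges over unordered pairs (the dominant term) of elements of $S'$, the indicator variables are nearly independent: two pairs $T, T'$ sharing no element are independent given $|S'|$, and the number of pairs of pairs sharing an element is only a factor $|S'|$ larger than the number of pairs, contributing $O(|S'|^3/|Y'|^2)$, which under $|S'| \ll |Y'|$ is dominated by the ``diagonal'' term. Hence $\Var{}{Z(S') \mid |S'|=s} = \Theta(s^2/|Y'|) = \Theta(R^2/M)$, so the standard deviation is $\Theta(R/\sqrt{M})$, matching the width of the window in the statement. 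Finally, the randomness of $|S'|$ itself (if $|S'|$ is not deterministic) only adds fluctuations of the same order, since $s^2/|Y'|$ varies by $O(1)$ relative factors over the allowed range; one can condition on $|S'|$ and average.

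With mean $\Theta(R^2/M)$ and standard deviation $\sigma = \Theta(R/\sqrt{M})$ in hand, the claim follows from \autoref{heuristic: nb multicollisions}: the number of multicollisions is approximately normal (or Poisson) with these parameters, so for any constants $c_1, c_2 > 0$ the event that $Z(S')$ lies in $[\,\Exp{}{Z(S')} - c_1 R/\sqrt{M},\ \Exp{}{Z(S')} + c_2 R/\sqrt{M}\,]$ is an interval of constant width in units of $\sigma$, hence has constant probability (bounded below by, say, $\Phi(c_2/\kappa) - \Phi(-c_1/\kappa)$ for the implied constant $\kappa$). Replacing $\Exp{}{Z(S')}$ by the explicit form $c\,|S'|^2/|Y'|$ only perturbs the endpoints by a constant multiple of $\sigma$, which can be absorbed into $c_1, c_2$. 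The main obstacle I anticipate is not any single estimate but making precise what \autoref{heuristic: nb multicollisions} buys us: strictly speaking one would want a Berry--Esseen-type or Chebyshev-type bound, and since the lemma only asks for a constant lower bound on the probability, a one-sided Chebyshev ($\Pr[|Z - \Exp{}{Z}| \le C\sigma] \ge 1 - 1/C^2$ for $C$ large, then shrinking the window is where the heuristic is actually invoked) is the cleanest route; getting the constant $c$ to be genuinely the \emph{same} across all the walks — rather than merely $\Theta(1)$ — is the one place where the uniformity hypotheses $R/2 \le |S'| \le R$, $M/2 \le |Y'| \le M$, $8R < M$ must be used carefully rather than cosmetically.
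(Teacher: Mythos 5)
Your proposal is correct in outline and shares the paper's overall strategy --- compute $\mathbb{E}[Z]$ and $\mathrm{Var}(Z)$, then conclude by Chebyshev --- but the decomposition is genuinely different. The paper indexes multicollisions by their common \emph{image} value, writing $Z=\#\{y\in Y' : N_y\geq 2\}$ with $N_y=\#\{x\in S' : f'(x)=y\}$, invokes \autoref{heuristic: nb multicollisions} to treat $N_y$ as Poisson with mean $\lambda=|S'|/|Y'|$ (the hypothesis $8R<M$ enters exactly to give $\lambda<1/4$, whence $\lambda^2/2\leq 1-e^{-\lambda}(1+\lambda)\leq 2\lambda^2/3$ and the bracketing of $\mathbb{E}[Z]$), and --- the key shortcut --- observes that the events $(N_y\geq 2)$ for distinct $y$ are \emph{negatively correlated}, so $\mathrm{Var}(Z)\leq\sum_y\mathrm{Var}(\mathds{1}_{N_y\geq 2})\leq\mathbb{E}[Z]$ in one line. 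You instead index by tuples of \emph{domain} elements and run a second-moment computation over pairs. That route is viable and arguably more self-contained (the moments of the pair count are exact for a random function, with no appeal to the Poisson heuristic), but it obliges you to supply two estimates the paper's decomposition makes unnecessary: (i) the colliding-pair count is not the multicollision-tuple count (a $j$-collision contributes $\binom{j}{2}$ pairs but a single tuple in the paper's convention), so the discrepancy and its fluctuations must be controlled; and (ii) an explicit bound on the covariances of overlapping pairs, whereas negative correlation over image values disposes of all cross terms at once. One caveat you should be aware of, which your write-up glosses over and which the paper's own proof also does not fully resolve: the claim that the lower-order corrections to $\mathbb{E}[Z]$ (of order $\lambda\cdot|S'|^2/|Y'|=\Theta(R^3/M^2)$) can be ``absorbed into $c_1,c_2$'' requires $R^3/M^2=\bigO{R/\sqrt{M}}$, i.e.\ $R\leq M^{3/4}$, which does not follow from $8R<M$ alone; the paper sidesteps this only by proving concentration in the wider bracket $[\,|S'|^2/(2|Y'|)-\bigO{R/\sqrt{M}},\ 2|S'|^2/(3|Y'|)+\bigO{R/\sqrt{M}}\,]$ rather than around a single $c|S'|^2/|Y'|$. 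So your plan is sound and at the same level of rigor as the paper's, with the pair-based variance computation being the extra work you take on in exchange for not leaning on the heuristic.
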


\begin{proof}
Let us write $Z := Z(S')$ for simplicity. 
By definition, we have $Z = \#\{ y \in Y' : N_y \geq 2\}$ with $N_y := \#\{x \in S' : f'(x)=y\}$. 
By \Cref{heuristic: nb multicollisions}, we can approximate $p := \Pr(N_y \geq 2) = 1 - e^{-\lambda} - \lambda e^{-\lambda}$ with mean $\lambda = |S'|/|Y'|$. 
For $\lambda \in (0,1/4)$, we have $\lambda^2/2 \leq 1- e^{-\lambda}(1+\lambda) \leq 2\lambda^2/3$. Note that $\lambda < 1/4$ follows from $8R < M$, and so $4|S'| < |Y'|$ for any values of $S'$ and $Y'$.
We have $\mathbb{E}[Z] = |Y'| \cdot p$, hence 
\begin{equation} \label{eq: bound mean}
	\frac{|S'|^2}{2|Y'|} \leq \mathbb{E}[Z] \leq \frac{2|S'|^2}{3|Y'|}.
\end{equation}
	
Now we aim to bound the variance. 
For distinct $y,z \in Y'$, the events $(N_y \geq 2)$ and $(N_z \geq 2)$ are negatively correlated, since having more elements $x \in S'$ with images $f'(x)=y$ decreases the chance that another $z \in Y'$ admits at least two preimages in $S'$. So the covariance of these two events is negative, and thus
	$$\mathrm{Var}(Z) = \sum_{y} \mathrm{Var}(\mathds{1}_{(N_y\geq 2)}) + 2 \sum_{y<z} \mathrm{Cov}(\mathds{1}_{(N_y\geq 2)}, \mathds{1}_{(N_z\geq 2)}) \leq \sum_y \mathrm{Var}(\mathds{1}_{(N_y \geq 2)})$$
We have $\mathrm{Var}(\mathds{1}_{(N_y \geq 2)}) = p(1-p) \leq p$, then $\mathrm{Var}(Z) \leq \mathbb{E}[Z]$. 
By \Cref{eq: bound mean}, the standard deviation then satisfies $\sigma_Z \leq \sqrt{2/3} \frac{|S'|}{\sqrt{|Y'|}} \leq C \frac{R}{\sqrt{M}}$ for constant $C>0$, as $|S'|\leq R$ and $|Y'|\geq M/2$. 

For any fixed constant $c>0$, we deduce
	$$\Pr \left( \frac{|S'|^2}{2|Y'|} - cC \frac{R}{\sqrt{M}} \leq Z \leq \frac{2 |S'|^2}{3|Y'|} + cC \frac{R}{\sqrt{M}} \right)  \leq \Pr(|Z - \mathbb{E}[Z]| \leq c\sigma_Z)$$
and by applying Chebyshev's inequality, we know this probability is at least $1 - 1/c^2$, which is constant.  
\qed
\end{proof}

When we take a random subset of size $R$ with codomain of size $M$, with many multicollisions, the average number of multicollisions is $E = \bigO{}(R^2 / M)$, and the standard deviation is $T = \bigO{}(R / \sqrt{M})$. As a consequence of~\autoref{lemma:collisions} we can say that a constant proportion of subsets have a number of multicollisions in the interval $[E; E+T]$; and also, a constant proportion fall in the interval $[E-T; E]$. Note that the technical condition $8R < M$ in~\autoref{lemma:collisions} is also obviously validated in our applications, as the vertex size is at most of order $\bigO{2^n}$, and the domain size at least $\bigO{2^n}$, so we can always leave a constant factor of difference between these two.

We can even go further. Because of the evolution of $R$ and $M$ throughout our algorithm, the average of the number of multicollisions will vary as well. But if we don't change $R$ and $M$ ``too much'', then we still have constant probability to fall into the intervals with the new value of $E'$.

\begin{lemma}\label{lemma:modif-e}
Fix $T =  \bigO{R / \sqrt{M}}$ and assume $R \ll M$. Let $R' \in [R - T; R]$ and $M' \in [M-T; M]$. Let $E' = c \frac{(R')^2}{M'}$ where $c$ is the universal constant of~\autoref{lemma:collisions}. Then the number of collisions in a random vertex of size $R$ falls in the intervals $[E'; E'+T]$ and $[E'-T; E']$ with constant probability.
\end{lemma}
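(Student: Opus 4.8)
The plan is to obtain this from \Cref{lemma:collisions} — in the form of the two one‑sided statements already recorded for $(R,M)$ in the paragraph following it — by checking that nudging $(R,M)$ to $(R',M')$ shifts the target center by an amount negligible next to $T$. First I would verify that \Cref{lemma:collisions} applies with the \emph{global} parameters $R,M$ and the current vertex and codomain: since $T=\bigO{R/\sqrt M}$ and $R\ll M$, we have $T\ll R$ and $T\ll\sqrt M\ll M$, so the vertex size $R$ lies (trivially) in $[R/2;R]$, the codomain size $M'$ lies in $[M/2;M]$, and the hypothesis $8R<M$ holds with room to spare (the vertex is of size at most $\bigO{2^n}$ and the codomain of size at least $\Omega(2^n)$). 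Hence, writing $Z$ for the collision count and $E:=cR^2/M'$ for the center predicted by \Cref{lemma:collisions}, a random vertex has $Z$ within $\bigO{R/\sqrt M}=\bigO{T}$ of $E$ with constant probability, and — through the normal/Poisson approximation of \Cref{heuristic: nb multicollisions}, exactly as in the text after \Cref{lemma:collisions} — a constant fraction of vertices satisfy $Z\in[E;E+T]$ and a constant fraction satisfy $Z\in[E-T;E]$.

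It then remains to re‑center at $E'=c(R')^2/M'$. These two centers share the denominator $M'$, so $E'/E=(R'/R)^2\in[1-2T/R;1]$ and therefore $|E-E'|\le 2cRT/M'\le 4cRT/M=o(T)$ since $R\ll M$. Inheriting from the analysis behind \Cref{lemma:collisions} and \Cref{heuristic: nb multicollisions} that $Z$ is concentrated on an $\Theta(T)$‑wide window around $E$ on which its law (approximately normal with standard deviation $\sigma_Z=\Theta(R/\sqrt M)=\Theta(T)$, or Poisson of comparable spread) has density $\Theta(1/T)$, translating either window $[E;E+T]$ or $[E-T;E]$ by $|E-E'|=o(T)$ changes the mass it carries by only $o(1)$. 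Hence $\prob{Z\in[E';E'+T]}$ and $\prob{Z\in[E'-T;E']}$ are each at least a positive constant minus $o(1)$, so bounded below by a constant for $n$ large; the case of a vertex of size $R'$ rather than $R$ is identical (it only changes $E$ by a further $o(T)$). This is the stated conclusion.

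The one place a genuine, though elementary, estimate is needed is the drift bound $|E-E'|=o(T)$ of the second paragraph — more precisely, that none of the mean, the standard deviation $\sigma_Z$, and the center move by more than $o(T)$ when $(R,M)$ becomes $(R',M')$, which reduces to the relative change being $\bigO{R/M}=o(1)$. The other point worth stressing is that a two‑sided, Chebyshev‑style concentration bound alone does \emph{not} deliver the two \emph{one‑sided} windows $[E';E'+T]$ and $[E'-T;E']$ separately: extracting each of them uses the approximate symmetry (anti‑concentration) of \Cref{heuristic: nb multicollisions}, just as for the original pair $(R,M)$.
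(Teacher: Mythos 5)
Your proposal is correct and follows essentially the same route as the paper: both reduce the statement to the drift bound $|E-E'| \ll T = \bigO{R/\sqrt{M}}$, obtained from $R \ll M$, and then conclude that the shifted intervals still capture constant probability mass. Your added remark that the one-sided intervals require the anti-concentration from \Cref{heuristic: nb multicollisions} rather than Chebyshev alone is a fair and slightly more careful reading of what the paper implicitly assumes in the paragraph following \Cref{lemma:collisions}.
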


\begin{proof}
We simply prove that the difference between $E'$ and $E$ is smaller than the standard deviation. Indeed, we have:
\begin{equation}
E' = \frac{(R')^2}{M'} = \frac{(R - x)^2}{M - y}
\end{equation}
for some $x,y \in [0; R / \sqrt{M}]$, which is much smaller than both $R$ and $M$. Thus:
\begin{align*}
E' &= \frac{R^2}{M} \left( 1 - \frac{x}{R} \right)^2  \left( 1 - \frac{y}{M} \right)^{-1} \\
& \simeq \frac{R^2}{M} \left( 1 - \frac{2x}{R} + \frac{y}{M} \right) \enspace.
\end{align*}
We notice then that: $ \frac{R^2}{M} \frac{x}{R} = \frac{Rx}{M} \leq \frac{R^2}{M^{3/2}} \ll \frac{R}{\sqrt{M}} $ as we have $R \ll M$. Besides: $\frac{R^2}{M} \frac{y}{M} \ll y = R / \sqrt{M}$.\qed
\end{proof}

In summary, if the parameters of the set are modified up to an amount equal to the standard deviation of the number of collisions; then our reasoning on intervals still holds.


\subsection{Main Ideas}

In order to explain our main idea, we first briefly recall the algorithms of~\cite{DBLP:journals/siamcomp/Ambainis07} and~\cite{DBLP:conf/eurocrypt/BonnetainCSS23} (with slight modifications).

Ambainis' element distinctness algorithm~\cite{DBLP:journals/siamcomp/Ambainis07} can be reframed as an MNRS quantum walk on a Johnson graph $J(\{0,1\}^n, R)$, where vertices are subsets of $\{0,1\}^n$ of size $R$. A vertex is marked if it contains a collision. Therefore, the probability for a vertex to be marked is $\bigO{R^2 / 2^m}$. The spectral gap is $\delta = 1/R$. The cost of \textbf{Setup} is $\bigO{R}$, the cost of \textbf{Update} is negligible (using a quantum radix tree), and the total gate count of the algorithm is: $\bigOt{R + \frac{2^{m/2}}{R} \sqrt{R} }$ which is optimized by setting $R = 2^{m/3}$ and obtaining $\bigOt{2^{m/3}}$.

\paragraph{Chained Walk.}
In the \emph{chained} quantum walk, the goal is to obtain many collisions. One starts with a vertex of size $R$, and performs a first quantum walk. At the end of this walk, one obtains the uniform superposition of marked vertices, i.e., subsets of $\{0,1\}^n$ containing at least one (multi)collision. These (multi)collisions are \emph{removed} from the subset, and measured. Crucially, the rest of the vertex is not measured. The remaining elements form a uniform superposition over:
\begin{itemize}
\item Subsets of size $R-t$ where $t$ is the number of elements measured;
\item Excluding the elements of $\{0,1\}^n$ which were measured;
\item Excluding all other elements of $\{0,1\}^n$ which would have an image equal to the images measured;
\item Which \emph{do not} contain a (multi)collision. 
\end{itemize}
This is the \emph{extraction} step of the walk. After outputting this collision, the remaining state can be reused as the starting state in a new walk, with a restricted function $f$ and a reduced vertex size. Note that this starting state is actually a superposition of \emph{unmarked} vertices (that do not contain a multicollision), but this is enough for quantum search.

The gate count is of order:
\begin{equation}
R + 2^k \frac{2^{m/2}}{R} \sqrt{R} \enspace.
\end{equation}
One can use any vertex size $R$ as long as $R \leq 2^{m/2}$: afterwards, vertices contain on average one collision or more, and this formula is not valid anymore.

Therefore, in order to reach the optimal complexity $\bigOt{}(2^{\frac{m}{3} + \frac{2k}{3}})$ (obtained for $R = 2^{\frac{m}{3} + \frac{2k}{3}}$), one needs to have $k \leq \frac{m}{4}$. In other words, the algorithm fails when \emph{the number of collisions to output is too high} with respect to $m$, and is not optimal in a regime with intermediate $m$ (e.g, $m = 1.5 n$) where we would want to output all collisions.

\paragraph{New Regime.}
Our new algorithm follows the principle of the chained quantum walk (the final state of a walk is reused as the initial state of the next one), but fits in the regime where a vertex contains more than one (multi)collision on average.

As a simplified view, let us neglect the fact that extracting multicollisions post-selects the remaining elements, modifying their distribution (and the domain of the quantum walk). Indeed, as long as the number of extracted elements remains at most a constant proportion of the entire domain, these modifications do not significantly impact the algorithm, although we will handle them carefully.

Following~\autoref{lemma:collisions}, a vertex of size $R$ contains $\bigO{}(R^2 / 2^m)$ multicollisions on average, and the standard deviation is $\bigO{}(R/2^{m/2})$. This means that if we extract $\bigO{}(R/2^{m/2})$ multicollisions and measure them, we collapse on a superposition of vertices which are still \emph{typical}: they only contain slightly less multicollisions than average. Using~\autoref{cor:flip}, we will show that we can always change a \emph{typical} superposition of vertices into another typical one after $\bigO{1}$ steps of quantum search, and in our case, of quantum walk.
We can thus transform our superposition of vertices into vertices having slightly \emph{more} multicollisions than average, repeat the extraction, and so on. Therefore, at the cost of a total time $\bigOt{}(\max(\sqrt{R}, R/2^{m/2}))$, we can extract $\bigO{}(R/2^{m/2})$ multicollisions. As we have $\sqrt{R} < 2^{m/2} \implies R / 2^{m/2} < \sqrt{R}$, the time is dominated by $\bigOt{}(\sqrt{R})$.

We find that $2^k$ is always larger than $(R/2^{m/2})$ and the total complexity is:
$$ R + \frac{2^k}{(R/2^{m/2})} \times \sqrt{R} = R +  \frac{2^k 2^{m/2}}{\sqrt{R}} \enspace. $$
Optimizing $R$, we recover the time $\bigOt{ 2^{2k/3 + m/3} }$.

This new regime, together with the previous one of~\cite{DBLP:conf/eurocrypt/BonnetainCSS23} (where the vertex contains less than one multicollision on average) covers all possible parameters for $k$ and $m$. It also completes the time-memory trade-off curve, where for any $\ell \leq 2k/3 + m/3$, there exists an algorithm with memory $\bigOt{2^\ell}$ and time $\bigOt{ 2^{k + m/2 - \ell/2} }$: for small values of $\ell$ the algorithm is the one of~\cite{DBLP:conf/eurocrypt/BonnetainCSS23}, and for large values of $\ell$ it is ours.


%
%
%
%

\section{New Algorithm}

This section is dedicated to the details of our new algorithm. We will rely mostly on:
\begin{itemize}
\item \autoref{cor:flip} to \emph{flip} a superposition of vertices into its complementary, using a constant number of walk steps, as long as the proportion of such vertices is constant;
\item \autoref{cor:flip} to \emph{correct} a superposition of \emph{atypical vertices} back into a superposition of typical ones;
\item \autoref{lemma:collisions} for the statistics on the number of collisions.
\end{itemize}


%


\subsection{Preliminaries: Restricted Functions}


Throughout this section we write $X = \{0,1\}^n$ and $Y = \{0,1\}^m$. 

The multicollision tuples returned by our algorithm are stored in a classical, quantum-accessible table $C$ with entries of the form: $u : \{x_1, \ldots, x_r\}$ where $f(x_1) = \ldots = f(x_r) = u$. The set of \emph{images} of $C$ is defined as: $I_C = \{u \in Y, \exists t \subseteq X \text{ s.t. }(u:t)\in C\}$. We have $I_C \subseteq Y$. 
The set of \emph{preimages} of $C$ is defined as: $P_C = \{x \in X, \exists u \in I_C, f(x) = u\} \subseteq X$. We caution the reader here that $P_C$ is in general bigger than the multicollision tuples stored in $C$, because $C$ does not necessarily contain all preimages.

We can ensure that throughout the algorithm, $|P_C| < 2^{n-1}$. Indeed, even in the corner case where $n = m$ and we want $\bigO{2^n}$ collisions, we can stop the algorithm after returning $2^n/\bigO{n}$ multicollision tuples, and repeat at most $\bigO{n}$ times afterwards (ultimately, a polynomial factor in the time complexity).

Given a table $C$, we define a restricted function $f_C$ that excludes all preimages of $C$, and all corresponding images:
\begin{equation}
\begin{cases}
f_C ~: X \backslash P_C \mapsto Y \backslash I_C \\
f_C(x) = f(x) \enspace.
\end{cases}
\end{equation}


\paragraph{Vertices and Data Structure.}
Given the table $C$, given a size parameter $R$ (which varies during our algorithm), we define a set of vertices for one of our walks:
\begin{equation}
V^{R,C} := \{ S \subseteq (X \backslash P_C), |S| = R \} 
\end{equation}
and furthermore, we define $V^{R,C}_{x,y} \subseteq V^{R,C} $ as the subset of vertices \emph{containing between $x$ and $y$ multicollisions}. We authorize $y = \infty$ in this notation, where $V^{R,C}_{x,\infty}$ means all vertices containing more than $x$ multicollisions. By definition of $P_C$, the corresponding images of these multicollisions are ensured to not appear in $I_C$.

Like in previous work~\cite{DBLP:conf/eurocrypt/BonnetainCSS23}, subsets of $X$ are identified with (orthogonal) quantum states thanks to the quantum radix tree data structure~\cite{DBLP:phd/basesearch/Jeffery14}, and we can operate efficiently on them. Therefore, we can introduce the notation $\ket{V^{R,C}_{x,y}} := \sum_{S \in V^{R,C}_{x,y}} \ket{S}$ as a uniform superposition of such sets. The set data structure also encodes additional data:
\begin{itemize}
\item The outputs of $f$;
\item The number of multicollisions and their locations. This can be added to the radix tree by storing at each node the number of multicollisions that occur in its sub-tree.
\end{itemize}

Polynomial-time algorithms exist for:
\begin{itemize}
\item Inserting an element into the set;
\item Checking if an element belongs to the set;
\item Creating a uniform superposition of (multi)collisions in the vertex.
\end{itemize}

\subsection{Superposition of Elements}


%

%

During our algorithm, we populate the table $C$ of multicollisions. However we cannot compute $P_C$, the set of all their preimages (we only know the preimages which belong to the table $C$). Despite this limitation, we can still create a uniform superposition over the set $X \backslash P_C$: intuitively, we just have to perform rejection sampling.

\begin{lemma}
Given $C$ stored in quantum-accessible memory such that $|C| \leq 2^{n-1}$, there exists a unitary to construct a uniform superposition over $X \backslash P_C$:
\begin{equation}
\ket{0} \mapsto \sum_{x \in (X \backslash P_C)} \ket{x} \enspace.
\end{equation}
with negligible error and polynomial time.
\end{lemma}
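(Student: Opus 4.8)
The plan is to implement the ``rejection sampling'' idea literally: produce the uniform superposition over all of $X$, attach a flag marking the elements that lie outside $P_C$, and post-select on that flag.

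The first point to make is that, although we cannot write down $P_C$, membership in $P_C$ is cheap to decide pointwise: by definition $x \in P_C$ if and only if $f(x) \in I_C$, and $I_C$ is completely determined by the table $C$, which sits in quantum-accessible memory in a structure (a quantum radix tree keyed by the image, say) that answers membership queries in $\poly(n)$ time using qRAM gates. So there is a unitary $W$ which, on input $\ket{x}\ket{0}$, uses one call to $O_f$ to write $f(x)$ on a scratch register, runs a qRAM-assisted membership test of $f(x)$ against $I_C$ to set a bit $g(x)$ equal to the indicator of $x \notin P_C$, and then uncomputes $f(x)$ with a second call to $O_f$; overall $W\colon \ket{x}\ket{0} \mapsto \ket{x}\ket{g(x)}$ with $\bigO{1}$ queries and $\poly(n)$ gates, and $W^{-1}$ is available by running it backwards.

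Next I would take $\textbf{Setup} = H^{\otimes n}$, giving $\ket{\psi_U} = 2^{-n/2}\sum_{x \in X}\ket{x}$, and view it as $\beta\ket{\psi_B} + \alpha\ket{\psi_G}$ with $\ket{\psi_G} = |X \setminus P_C|^{-1/2}\sum_{x \notin P_C}\ket{x}$ --- which is exactly the target state --- and $\alpha = \sqrt{1 - |P_C|/2^n}$. Since $|X \setminus P_C| > 2^{n-1}$ throughout the algorithm, we have $\alpha > 1/\sqrt{2}$, so the target component is the majority one. The procedure is then: run $H^{\otimes n}$, apply $W$, measure the flag bit; with probability $\alpha^2 > 1/2$ it reads $1$, in which case the state has collapsed \emph{exactly} onto $|X \setminus P_C|^{-1/2}\sum_{x \notin P_C}\ket{x}\ket{1}$, and $W^{-1}$ then resets the flag to $\ket{0}$, leaving the desired superposition; if it reads $0$, discard and restart. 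The expected number of rounds is less than $2$, and truncating the loop after $n$ rounds makes the overall failure probability at most $2^{-n}$, i.e.\ negligible, at a worst-case cost of $\bigO{n}$ queries and $\poly(n)$ gates. (If one prefers a fixed, measurement-free running time, replace the loop by a fixed-point amplitude amplification of $\ket{\psi_G}$ to distance $2^{-n}$, which needs $\bigO{n}$ reflections built from $W$, $H^{\otimes n}$ and a reflection about $\ket{0^n}$; ordinary amplitude amplification would overshoot and we do not know $|P_C|$ exactly, which is why the fixed-point variant is used.)

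The only parts needing care are the two highlighted above: that the inability to compute $P_C$ is harmless, because the flag only ever requires the single-query pointwise test $f(x) \in I_C$ rather than the set itself; and that the rejection loop runs in (expected) polynomial time, which is exactly what the bound $|X \setminus P_C| > 2^{n-1}$ guarantees. The remaining ingredients --- unitarity of $W$ after clean uncomputation of its $\poly(n)$ scratch registers, and the fact that a flag outcome $1$ projects onto the uniform superposition over $X \setminus P_C$ with no error --- are routine.
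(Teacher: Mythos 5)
Your proposal is correct and follows essentially the same route as the paper: prepare the uniform superposition over $X$, flag membership in $P_C$ via the pointwise test $f(x)\in I_C$ using the qRAM-stored table, and then boost the good component, with your fixed-point amplitude amplification variant corresponding exactly to the paper's ``robust amplitude amplification'' under the same guarantee $|P_C|<|X|/2$. Your proposal is in fact somewhat more explicit than the paper's proof about why the inability to compute $P_C$ as a set is harmless.
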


\begin{proof}
The idea is to perform a (quantum) rejection sampling, starting from:
\[ \sum_{x \in X} \ket{x} \]
and testing whether $x \in P_C$ efficiently thanks to the quantum RAM access. We obtain:
\[ \sqrt{\frac{|X| - |P_C|}{|X|}} \sum_{x \in (X \backslash P_C)} \ket{x} \ket{0} + \sqrt{\frac{|P_C|}{|X|}} \sum_{x \in P_C} \ket{x} \ket{1} \enspace. \]
We amplify the part 0 using robust amplitude amplification techniques. In particular, while we do not know the size of $P_C$, an upper bound of $|P_C| < |X| / 2$ is sufficient to succeed in polynomial time. \qed
\end{proof}

By using the same algorithm with an additional set of forbidden elements, we obtain a polynomial-time algorithm that creates a uniform superposition of neighboring vertices, which is sufficient to implement the \textbf{Update} unitary of our walks.

\subsection{Algorithm}

We consider quantum walks on Johnson graphs. The set of vertices for each walk is given by $V^{R,C}$ as defined above, for $C$ the current table of multicollisions, and $R$ the current vertex size. As explained above, $R$ remains similar to the initial vertex size.

Our algorithm (\autoref{alg:collision}) alternates between \emph{walk steps} and \emph{extraction steps}, which are detailed later in this section. An extraction step starts from a superposition of typical vertices with more collisions than average, $\ket{V^{R,C}_{E,E+T}}$, and produces a superposition $\ket{V^{R,C}_{E-T,E}}$ with less collisions. A walk step starts from $\ket{V^{R,C}_{E-T,E}}$ and restores a superposition of vertices $\ket{V^{R,C}_{E,E+T}}$ suitable for the next extraction step.

\begin{algorithm}[htbp]
\caption{Multiple collision-finding algorithm.}\label{alg:collision}

\begin{algorithmic}[1]
\Statex Choose an initial vertex size $2^{\ell}$
\medskip
\State Initialize $R \leftarrow 2^\ell$
\State Initialize $C \leftarrow \emptyset$
\State Initialize the state $\ket{V^{R,C}}$ (uniform superposition of vertices)
\State Project on $\ket{V^{R,C}_{E,E+T}}$ using~\autoref{cor:flip}
\medskip
\For{ $2^k 2^{m/2} / R$ iterations }\label{step:for}
\Statex \textbf{Extraction step:}
\State \textbf{Extract} $T$ multicollision tuples and measure them
\State The state becomes: $\ket{V^{R',C'}_{E-T,E}}$ for a new $R', C'$
\State \textbf{Extract} a little more to correct the state into $\ket{V^{R',C'}_{E'-T,E'}}$
\Statex \Comment{Here $E'$ is the new expectation value, which depends on $R'$, and is slightly smaller than $E$}
\State Update $R$ and $C$
\medskip
\Statex \textbf{Walk step.} From now on we write only the intervals on the number of collisions. We start with the internal $[E'-T; E']$ and we want to obtain $[E'; E'+T]$.
\State Measure the interval on the number of collisions \label{step:case}
\Statex \Comment{Possibilities: $[0; E'-T']$, $[E'-T'; E']$; $[E'; E'+T']$, $[E'+T'; \infty]$, each having a constant probability}
\State \textbf{Case} $[0; E'-T']$: transform the state into $[E'-T'; \infty]$, go to Step~\ref{step:case}
\State \textbf{Case} $[E'-T; E']$: transform the state into $[0; E'-T'] \cup [E'; +\infty]$, go to Step~\ref{step:case}
\State \textbf{Case} $[E'; E'+T]$: go to Step~\ref{step:for}
\State \textbf{Case} $[E'+T; \infty]$: transform the state into $[0;E'+T']$, go to Step~\ref{step:case}
\EndFor
\end{algorithmic}

\end{algorithm}

We start by analyzing the walk steps. By~\autoref{lemma:collisions} and~\autoref{lemma:modif-e}, we know that the number of multicollisions in vertices does not vary much; its average depends on $R$, but only slightly decreases at each loop iteration. We can adapt the number of extracted multicollisions to correct the interval appropriately (we will see later that multicollisions are extracted one by one).

After obtaining the state $\ket{V^{R',C'}_{E'-T,E'}}$, we transform it into $\ket{V^{R',C'}_{E',E'+T}}$ using a series of applications of~\autoref{cor:flip}. For each application, we define a different partition of ``good'' and ``bad'' vertices depending on the number of multicollisions that they contain. We have defined four intervals which form a partition of all vertices for the walk. As long as we do not obtain the wanted interval $[E';E'+T]$, applying~\autoref{cor:flip} and projecting transforms the current interval into another one of the three, where each of them has a constant probability to appear. Consequently this process terminates with an expected constant number of steps. Each step contains an expected constant number of single-step quantum walks.

\subsection{Extraction Steps}

We detail the \emph{extraction} routine which transforms $\ket{V^{R,C}_{E, E+T}}$ into $\ket{V^{R',C'}_{E-T, E}}$ for modified parameters $R',C'$, and returns $T$ multicollision tuples (classical values). Actually, we give a more general routine to extract a single multicollision tuple from a state $\ket{V^{R,C}_{x,y}}$, where $x,y$ are input parameters.



\paragraph{Basic Idea.}
Given a vertex $V$, we know thanks to the tree data structure how many multicollisions appear and where they are located. The basic idea of the extraction routine is to create a uniform superposition of these collisions, outside of the tree, into a separate register, and to measure:
\begin{itemize}
\item the size of the register, which projects on multicollisions having the same number of elements (and determines the new parameter $R'$);
\item the actual elements of the multicollisions, which gives the new $C'$.
\end{itemize}

This transforms the state $\ket{V^{R,C}_{x,y}}$ into a superposition $\ket{V^{R',C'}_{x',y'}}$ of vertices which exclude any preimage of the newly measured multicollision, and contain between $x-1$ and $y-1$ collisions. 

\paragraph{An Issue with post-Selection.}
There is a subtlety due to the fact that the vertices in $\ket{V^{R,C}_{x,y}}$ do not contain the same number of multicollisions. Consider the following example of a superposition with a vertex having 3 collisions $\ket{c,c', c''}$ and a vertex having two collisions $\ket{c,c'}$. Let us simplify the writing of these vertices by considering only the collisions, furthermore we write $\ket{\{c,c'\}}$ for an unordered \emph{set} of elements. We will select one of the collisions and move it to a fixed position in the quantum memory, which we measure. Before measurement, the quantum state has the form:
\begin{equation}
\frac{1}{\sqrt{3}} \ket{c}\ket{\{c',c''\}} + \frac{1}{\sqrt{3}} \ket{c'}\ket{\{c,c''\}} + \frac{1}{\sqrt{3}} \ket{c''}\ket{\{c,c'\}} + \frac{1}{\sqrt{2}} \ket{c}\ket{\{c'\}} + \frac{1}{\sqrt{2}} \ket{c'}\ket{\{c\}} \enspace.
\end{equation}
Now, we measure. Suppose that we obtain $c$, then the state becomes proportional to the following, non-uniform superposition:
\begin{equation}
\frac{1}{\sqrt{3}} \ket{\{c',c''\}} + \frac{1}{\sqrt{2}} \ket{\{c'\}} \enspace.
\end{equation}


The issue here is that there is a post-selection depending on the number of collisions in the vertices, which is not constant in our algorithm. As we need the superposition to remain uniform, we need to correct this deviation.

\paragraph{Procedure.}
Let us start from the superposition $\ket{V_{x,y}^{R,C}}$. In a given vertex $V$ with $z$ multicollisions, let $(t_i)_{1 \leq i \leq z}$ be the multicollision tuples. When we create the uniform superposition of multicollisions, we obtain the state: 
\begin{equation}
 \sum_{x \leq z \leq y} \sum_{V \in V^{R,C}_{z}} \ket{V} \left( \frac{1}{\sqrt{z}} \sum_{1 \leq i \leq z} \ket{ t_i } \right) \enspace. 
 \end{equation}
The problem here is that the amplitude over a given $t_i$ depends on $z$ (the number of multicollisions in the vertex). So we make it constant by adding new ``dummy'' values $d_i$:
\begin{equation}
 \sum_{x \leq z \leq y} \sum_{V \in V^{R,C}_{z}} \ket{V} \left(  \frac{1}{\sqrt{y}} \sum_{1 \leq i \leq z} \ket{ t_i } + \frac{1}{\sqrt{y}} \sum_{z+1 \leq i \leq y} \ket{d_i} \right) \enspace.
 \end{equation}
Now, we measure the last register:
\begin{itemize}
\item With probability $ \frac{1}{|V^{R,C}_{x,y}|} \sum_{x \leq z \leq y} |V^{R,C}_{z}| \frac{z}{y} $, we obtain a multicollision. We then collapse on a uniform superposition of vertices $V^{R',C'}_{x-1,y-1}$ where $R'$ is reduced by the size of the multicollision, and $C'$ contains this new multicollision (as a result of post-selection, the vertex excludes any preimage of the new value). This is what we wanted.

\item For all $x + 1 \leq i \leq y$, we measure $d_i$ with probability:
$$ \frac{1}{|V^{R,C}_{x,y}|} \sum_{x \leq z \leq i-1}  |V_{z}^{R,C}| \frac{1}{y} = \frac{1}{y} \frac{|V_{x, i-1}^{R,C}|}{|V^{R,C}_{x,y}|} $$
In that case we project the vertex on:
\begin{equation}
 \sum_{x \leq z \leq y | z+1 \leq i} \sum_{V \in V^{R,C}_{z}} \ket{V} = \ket{V^{R,C}_{x, i-1}} \enspace.
\end{equation}
Indeed, we still have a uniform superposition of sets, and a set belongs in this superposition if and only if it has less than $i-1$ multicollisions (in which case, the dummy $\ket{d_i}$ is created). 
There are no new multicollisions extracted; we have simply changed the bounds on the number of multicollisions in the superposition of vertices.
\end{itemize}

The latter case is problematic for us, as we have refined the superposition: this makes the vertex possibly more ``atypical''. Fortunately, we can always go back to the typical case, at a cost depending on the exact proportion of vertices that we fell upon, using~\autoref{lemma:qaa-bad}.

\begin{lemma}
Assume that $V^{R,C}_{x,y}$, $V^{R,C}_{0,x}$, $V^{R,C}_{y,\infty}$ contain a constant proportion of all vertices $V^{R,C}$, and $x/y = \bigO{1}$. Then there is a quantum algorithm with \emph{average time} $\bigO{ \frac{\sqrt{R} (y-x)}{y} }$ that, on input $\ket{V^{R,C}_{x,y}}$ extracts \emph{on average} a constant number of multicollisions and returns an updated $\ket{V^{R',C'}_{x-1,y-1}}$.
\end{lemma}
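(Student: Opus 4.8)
The plan is to implement the claimed routine as a \emph{rejection loop}: we repeatedly run one ``padded extraction'' attempt on $\ket{V^{R,C}_{x,y}}$ and, on failure, restore a usable superposition before retrying. One attempt is the procedure sketched just before the statement: using the quantum radix tree (whose nodes record the number of multicollisions in their sub-tree) we build, on an auxiliary register, the uniform superposition of the $z$ multicollision tuples of each vertex, padded with $y-z$ \emph{distinct} dummy symbols $d_{z+1},\dots,d_y$ so that every branch has amplitude $1/\sqrt{y}$ (this costs $\poly(n)$). Measuring the auxiliary register gives, with probability $\frac{1}{|V^{R,C}_{x,y}|}\sum_{x\le z\le y}|V^{R,C}_z|\frac{z}{y}$, a genuine multicollision tuple: its size fixes the new vertex size $R'$, its elements are appended to $C$ to form $C'$, and the rest collapses onto the uniform superposition of vertices of $V^{R',C'}$ that avoid all preimages of the new tuple and carry between $x-1$ and $y-1$ multicollisions, i.e.\ onto $\ket{V^{R',C'}_{x-1,y-1}}$ --- the desired output. (Post-selecting onto $X\setminus P_{C'}$ is harmless because $|P_{C'}|<2^{n-1}$ is maintained globally, so that superposition can still be prepared.) Since $z/y\ge x/y$ throughout $[x,y]$, one attempt succeeds with probability at least $x/y$, which --- reading the hypothesis $x/y=\bigO{1}$ as ``$x/y$ bounded away from $0$'' --- is a constant, so only $\bigO{1}$ attempts are needed in expectation and the routine returns exactly one tuple (failed attempts and recoveries produce none).

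The work is in the failure branch. Measuring a dummy $d_i$ collapses the state onto the uniform superposition $\ket{V^{R,C}_{x,i-1}}$, a refined and possibly exponentially atypical superposition of proportion $\epsilon^{(i)}:=|V^{R,C}_{x,i-1}|/|V^{R,C}|$ among all vertices; to retry we must bring it back to $\ket{V^{R,C}_{x,y}}$. We will do this with a constant number of applications of \autoref{cor:flip}, exactly as in a walk step: $Ref_{\ket{\psi_U}}$ is realised by the MNRS phase estimation of the Johnson-graph walk, costing $\bigOt{1/\sqrt{\delta}}=\bigOt{\sqrt{R}}$ per call since the spectral gap is $\delta=\Theta(1/R)$, while $Ref_{\ket{\psi_B}}$ is the trivial check on the number of multicollisions stored in the tree. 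Flipping $[x,i-1]$ against its complement has amplitude parameter $\alpha=\sqrt{\epsilon^{(i)}}$, so \autoref{cor:flip} uses $\bigO{1/\sqrt{\epsilon^{(i)}}}$ reflections and this flip costs $\bigOt{\sqrt{R/\epsilon^{(i)}}}$ in expectation; the remaining flips needed to re-land on $[x,y]$ only manipulate intervals of constant proportion (here we use that $V^{R,C}_{0,x}$, $V^{R,C}_{x,y}$ and $V^{R,C}_{y,\infty}$ are each a constant fraction of $V^{R,C}$, and \autoref{lemma:qaa-bad} to handle atypical starting states), hence cost only $\bigOt{\sqrt{R}}$ and are dominated.

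For the running time, a failed attempt with outcome $d_i$ --- which occurs with probability $q_i=|V^{R,C}_{x,i-1}|/(y\,|V^{R,C}_{x,y}|)$ --- incurs recovery cost $\bigOt{\sqrt{R/\epsilon^{(i)}}}$, so the expected recovery cost triggered by one attempt is
\begin{align*}
\sum_{i=x+1}^{y} q_i\cdot\bigOt{\sqrt{R/\epsilon^{(i)}}}
&= \frac{\bigOt{\sqrt{R}}\,\sqrt{|V^{R,C}|}}{y\,|V^{R,C}_{x,y}|}\sum_{i=x+1}^{y}\sqrt{|V^{R,C}_{x,i-1}|} \\
&\le \bigOt{\sqrt{R}}\cdot\frac{y-x}{y}\cdot\frac{\sqrt{|V^{R,C}|}}{\sqrt{|V^{R,C}_{x,y}|}},
\end{align*}
where the inequality bounds each of the $y-x$ terms by $\sqrt{|V^{R,C}_{x,y}|}$ (using $|V^{R,C}_{x,i-1}|\le|V^{R,C}_{x,y}|$). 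As $|V^{R,C}_{x,y}|=\Theta(|V^{R,C}|)$ by hypothesis, the last fraction is $\Theta(1)$, so one attempt costs $\bigOt{\sqrt{R}(y-x)/y}$ in expectation; since there are $\bigO{1}$ attempts in expectation and the $\poly(n)$ setup cost of each attempt is absorbed into the polynomial factors suppressed by $\bigO{}$, the total is $\bigO{\sqrt{R}(y-x)/y}$, as claimed.

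The main obstacle is precisely this failure branch: one must argue that the (possibly exponentially) atypical state $\ket{V^{R,C}_{x,i-1}}$ can be faithfully pushed back to $\ket{V^{R,C}_{x,y}}$ with the right sequence of good/bad interval partitions in \autoref{cor:flip} (verifying the amplitude hypotheses of \autoref{lemma:qaa-bad} along the way), and then check that the $1/\sqrt{\epsilon^{(i)}}$ blow-up of that recovery is exactly cancelled --- after weighting by $q_i\propto\epsilon^{(i)}$ and summing the $y-x$ possible failure outcomes --- down to the advertised factor $(y-x)/y$. The remaining ingredients (the padded superposition, the success probability, and the bookkeeping of $R'$ and $C'$) are routine.
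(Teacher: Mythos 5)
Your proposal is correct and follows essentially the same route as the paper's own proof: the padded extraction with dummies, the constant success probability lower-bounded by $x/y$, recovery from the atypical state $\ket{V^{R,C}_{x,i-1}}$ via \autoref{cor:flip} at cost $\bigOt{\sqrt{R/\epsilon^{(i)}}}$ for the first (expensive) flip plus $\bigO{1}$ constant-proportion flips, and the identical weighted sum $\sum_i q_i\sqrt{R/\epsilon^{(i)}} = \bigO{\sqrt{R}(y-x)/y}$ using $|V^{R,C}_{x,y}| = \Theta(|V^{R,C}|)$ and $|V^{R,C}_{x,i-1}| \leq |V^{R,C}_{x,y}|$. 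Your explicit reading of the hypothesis $x/y = \bigO{1}$ as ``bounded away from $0$'' is the intended one, and your accounting of the per-reflection cost $\bigOt{\sqrt{R}}$ via the spectral gap matches the paper.
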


\begin{proof}
The probability to obtain a multicollision immediately is bigger than $x/y$, which is constant. If we do not obtain a multicollision, we need to revert back to $\ket{V^{R,C}_{x,y}}$.

For all $i \geq x$, we project on $\ket{V_{x, i}^{R,C}}$ with probability $\frac{1}{y} \frac{|V_{x, i}^{R,C}|}{|V^{R,C}_{x,y}|}$. Via a series of a constant number (on expectation) of manipulations, we will transform the interval $[x; i]$ into $[0;x[$, then $[x;y]$.
\begin{itemize}
\item Starting from $[x;i]$, we use~\autoref{cor:flip} to obtain $[0;x] \cup [i; \infty[$ using an expected number of $\bigO{\sqrt{\frac{|V^{R,C}|}{|V^{R,C}_{x,i}|}  } }$ iterations. With constant probability we project on $[0;x]$ or $[i;\infty[$.
\item Starting from $[i; \infty[$, we use~\autoref{cor:flip} to obtain $[0;i[$ with $\bigO{1}$ iterations. With constant probability we project on $[0;x]$ or $[x;i]$. (If $i$ is close to $x$, the probability to project on $[x;i]$ is actually low). In the latter case, we go back to the previous step.
\end{itemize}

As a consequence, for all $i \geq x$, we succeed later with $\bigO{\sqrt{\frac{|V^{R,C}|}{|V^{R,C}_{x,i}|}  } }$ iterations on average. By taking the average over all $i$, the number of iterations that we need is, up to a constant:
\begin{equation}
\sum_{i = x}^y \frac{1}{y} \frac{|V_{x,i}^{R,C}|}{|V^{R,C}_{x,y}|}  \sqrt{\frac{|V^{R,C}|}{|V^{R,C}_{x,i}|}  }  = \bigO{ \frac{1}{y} \sum_{i = x}^y \sqrt{\frac{|V^{R,C}_{x,i}|}{|V^{R,C}_{x,y}|} } } = \bigO{ \frac{y-x}{y} } \enspace.
\end{equation}
Here we use the fact that $|V^{R,C}|$ and $|V^{R,C}_{x,y}|$ are the same up a constant, and then, that $|V^{R,C}_{x,i}| \leq |V^{R,C}_{x,y}|$. Each iteration costs $\sqrt{R}$ queries and $\bigOt{\sqrt{R}}$ time, giving the result.\qed
\end{proof}

We use this algorithm for $\ket{V^{R,C}_{E, E+T}}$ where $x = E$ and $y = E+T$, which satisfy its conditions. The average time is $\bigO{ \sqrt{R} T / E }$, where $T = \bigO{ R/\sqrt{M} }$ and $E = \bigO{R^2/M}$. Since we need to extract $T$ multicollisions, the total average time is $\bigO{ \sqrt{R} T^2 / E} = \bigO{ \sqrt{R}}$, the same as the walk step in~\autoref{alg:collision}.

\section{Conclusion}
In this paper, we proved that the multiple collision search problem quantum complexity in queries (and time) is $2^{2k/3 + m/3}$ where $m$ is the output bit-size and $2^k$ the number of collisions, closing the remaining gap in some range of $k$ and $m$. In the new targeted regime, we use MNRS quantum walks as a tool for \emph{diffusion} in the current state rather than \emph{search}, which allows to correct the expected number of collisions in this state -- allowing to extract collisions in subsequent steps. This peculiar use of the diffusion operator in the MNRS walk could perhaps have further applications.



\bibliography{biblio}

\begin{thebibliography}{10}
\providecommand{\url}[1]{\texttt{#1}}
\providecommand{\urlprefix}{URL }

\bibitem{DBLP:journals/jacm/AaronsonS04}
Aaronson, S., Shi, Y.: Quantum lower bounds for the collision and the element
  distinctness problems. J. {ACM}  51(4),  595--605 (2004)

\bibitem{DBLP:journals/siamcomp/Ambainis07}
Ambainis, A.: {Quantum Walk Algorithm for Element Distinctness}. {SIAM} J.
  Comput.  37(1),  210--239 (2007)

\bibitem{DBLP:conf/eurocrypt/BonnetainCSS23}
Bonnetain, X., Chailloux, A., Schrottenloher, A., Shen, Y.: Finding many
  collisions via reusable quantum walks - application to lattice sieving. In:
  {EUROCRYPT} {(5)}. Lecture Notes in Computer Science, vol. 14008, pp.
  221--251. Springer (2023)

\bibitem{DBLP:conf/asiacrypt/BouraNS14}
Boura, C., Naya{-}Plasencia, M., Suder, V.: Scrutinizing and improving
  impossible differential attacks: Applications to clefia, camellia, lblock and
  simon. In: {ASIACRYPT} {(1)}. Lecture Notes in Computer Science, vol. 8873,
  pp. 179--199. Springer (2014)

\bibitem{brassard2002quantum}
Brassard, G., Hoyer, P., Mosca, M., Tapp, A.: Quantum amplitude amplification
  and estimation. Contemporary Mathematics  305,  53--74 (2002)

\bibitem{DBLP:conf/latin/BrassardHT98}
Brassard, G., H{\o}yer, P., Tapp, A.: Quantum cryptanalysis of hash and
  claw-free functions. In: {LATIN}. Lecture Notes in Computer Science, vol.
  1380, pp. 163--169. Springer (1998)

\bibitem{CL21}
Chailloux, A., Loyer, J.: Lattice sieving via quantum random walks. In:
  {ASIACRYPT} {(4)}. Lecture Notes in Computer Science, vol. 13093, pp. 63--91.
  Springer (2021)

\bibitem{DBLP:journals/dcc/DavidNS24}
David, N., Naya{-}Plasencia, M., Schrottenloher, A.: Quantum impossible
  differential attacks: applications to {AES} and {SKINNY}. Des. Codes
  Cryptogr.  92(3),  723--751 (2024),
  \url{https://doi.org/10.1007/s10623-023-01280-y}

\bibitem{DBLP:conf/eurocrypt/DucasEEK24}
Ducas, L., Esser, A., Etinski, S., Kirshanova, E.: Asymptotics and improvements
  of sieving for codes. In: {EUROCRYPT} {(6)}. Lecture Notes in Computer
  Science, vol. 14656, pp. 151--180. Springer (2024)

\bibitem{DBLP:journals/dcc/EngelbertsEL25}
Engelberts, L., Etinski, S., Loyer, J.: Quantum sieving for code-based
  cryptanalysis and its limitations for {ISD}. Des. Codes Cryptogr.  93(6),
  1611--1644 (2025), \url{https://doi.org/10.1007/s10623-024-01545-0}

\bibitem{DBLP:conf/stoc/Grover96}
Grover, L.K.: {A Fast Quantum Mechanical Algorithm for Database Search}. In:
  {Proceedings of the Twenty-Eighth Annual {ACM} Symposium on the Theory of
  Computing 1996}. pp. 212--219. {ACM} (1996)

\bibitem{DBLP:journals/tosc/HosoyamadaNS20}
Hosoyamada, A., Naya{-}Plasencia, M., Sasaki, Y.: Improved attacks on sliscp
  permutation and tight bound of limited birthday distinguishers. {IACR} Trans.
  Symmetric Cryptol.  2020(4),  147--172 (2020)

\bibitem{DBLP:phd/basesearch/Jeffery14}
Jeffery, S.: Frameworks for Quantum Algorithms. Ph.D. thesis, University of
  Waterloo, Ontario, Canada (2014), \url{http://hdl.handle.net/10012/8710}

\bibitem{DBLP:conf/eurocrypt/LiuZ19}
Liu, Q., Zhandry, M.: On finding quantum multi-collisions. In: {EUROCRYPT}
  {(3)}. Lecture Notes in Computer Science, vol. 11478, pp. 189--218. Springer
  (2019)

\bibitem{DBLP:journals/siamcomp/MagniezNRS11}
Magniez, F., Nayak, A., Roland, J., Santha, M.: Search via quantum walk. {SIAM}
  J. Comput.  40(1),  142--164 (2011)

\bibitem{nielsen2002quantum}
Nielsen, M.A., Chuang, I.: Quantum computation and quantum information (2002)

\bibitem{DBLP:journals/ieicet/SuzukiTKT08}
Suzuki, K., Tonien, D., Kurosawa, K., Toyota, K.: Birthday paradox for
  multi-collisions. {IEICE} Trans. Fundam. Electron. Commun. Comput. Sci.
  91-A(1),  39--45 (2008), \url{https://doi.org/10.1093/ietfec/e91-a.1.39}

\end{thebibliography}
\bibliographystyle{splncs03}



\end{document}